\numberwithin{figure}{section}
\newtheorem{theorem}{Theorem}
\newtheorem{lemma}{Lemma}
\newtheorem{corollary}{Corollary}
\newenvironment{proof}{{\sc Proof. }}{\hfill$\Box$\vspace{0.2in}}
\title{Improved approximation algorithms for path vertex covers in regular graphs}%
\author{An~Zhang\thanks{Department of Mathematics, Hangzhou Dianzi University.  Hangzhou 350018, China.
	\texttt{Emails: \{anzhang,chenyong\}@hdu.edu.cn}}
	\and
	Yong~Chen${^*}$
	\and
	Zhi-Zhong~Chen\thanks{Division of Information System Design, Tokyo Denki University.  Saitama 350-0394, Japan.
	\texttt{Email: zzchen@dendai.ac.jp}}
	\and
	Guohui~Lin\thanks{Department of Computing Science, University of Alberta. Edmonton, Alberta T6G 2E8, Canada.
	\texttt{Email: guohui@ualberta.ca}}
	\thanks{Correspondence author.}}%
\date{\today}
\begin{document}
\maketitle

\begin{abstract}
Given a simple graph $G = (V, E)$ and a constant integer $k \ge 2$,
the $k$-path vertex cover problem ({\sc P$k$VC}) asks for a minimum subset $F \subseteq V$ of vertices such that
the induced subgraph $G[V - F]$ does not contain any path of order $k$.
When $k = 2$, this turns out to be the classic vertex cover ({\sc VC}) problem, which admits a $\left(2 - {\rm \Theta}\left(\frac 1{\log|V|}\right)\right)$-approximation.
The general {\sc P$k$VC} admits a trivial $k$-approximation;
when $k = 3$ and $k = 4$, the best known approximation results for {\sc P$3$VC} and {\sc P$4$VC} are a $2$-approximation and a $3$-approximation,
respectively.
On $d$-regular graphs, the approximation ratios can be reduced to
$\min\left\{2 - \frac 5{d+3} + \epsilon, 2 - \frac {(2 - o(1))\log\log d}{\log d}\right\}$ for {\sc VC} ({\it i.e.}, {\sc P$2$VC}),
$2 - \frac 1d + \frac {4d - 2}{3d |V|}$ for {\sc P$3$VC},
$\frac {\lfloor d/2\rfloor (2d - 2)}{(\lfloor d/2\rfloor + 1) (d - 2)}$ for {\sc P$4$VC},
and $\frac {2d - k + 2}{d - k + 2}$ for {\sc P$k$VC} when $1 \le k-2 < d \le 2(k-2)$.
By utilizing an existing algorithm for graph defective coloring, we first present a
$\frac {\lfloor d/2\rfloor (2d - k + 2)}{(\lfloor d/2\rfloor + 1) (d - k + 2)}$-approximation for {\sc P$k$VC} on $d$-regular graphs when $1 \le k - 2 < d$.
This beats all the best known approximation results for {\sc P$k$VC} on $d$-regular graphs for $k \ge 3$,
except for {\sc P$4$VC} it ties with the best prior work and in particular they tie at $2$ on cubic graphs and $4$-regular graphs.
We then propose a $1.875$-approximation and a $1.852$-approximation for {\sc P$4$VC} on cubic graphs and $4$-regular graphs, respectively.
We also present a better approximation algorithm for {\sc P$4$VC} on $d$-regular bipartite graphs.

\paragraph{Keywords:}
Path vertex cover; regular graph; defective coloring; maximum independent set; approximation algorithm 
\end{abstract}

\section{Introduction}
We investigate a vertex deletion problem called the {\em minimum $k$-path vertex cover} problem, denoted as {\sc P$k$VC},
which is a generalization of the classic minimum vertex cover ({\sc VC}) problem~\cite{GJ79}.
The {\sc P$k$VC} problem has been studied for more than three decades in the literature,
and it has applications in wireless sensor networks such as constructing optimal connectivity paths and
in networking security such as monitoring the message traffic and detecting malicious attack~\cite{ACB12}.

Given a simple graph $G = (V, E)$ and a constant integer $k \ge 2$, a $k$-path (or, a path of order $k$) is a simple path containing $k$ vertices;
the {\sc P$k$VC} problem asks for a minimum subset $F \subseteq V$ of vertices such that
the induced subgraph $G[V - F]$ (set minus operation) does not contain any $k$-path~\cite{BKK11,JT13,Jak15}.
When $k = 2$, this turns out to be {\sc VC}.
In the literature, a $k$-path vertex cover is also called a vertex $k$-path cover~\cite{BJK13},
or a vertex cover $P_k$~\cite{TZ11,TY13},
or a $P_k$ vertex cover~\cite{DMM15},
or a $k$-observer~\cite{ACB12,RSU15}.
Also, when $F \subseteq V$ is a ($2$-path) vertex cover, $V - F$ is an independent set,
and when $F \subseteq V$ is a $3$-path vertex cover, $V - F$ is a dissociation set.
The maximum independent set ({\sc MIS}) problem is another classic NP-hard problem~\cite{GJ79};
the maximum dissociation set problem is also classic, was introduced more than three decades ago by Yannakakis~\cite{Yan81}, and is NP-hard even on bipartite graphs.

The concept of $k$-path vertex covers, and many related ones, form a line of research in graph theory.
The minimum cardinality of a $k$-path vertex cover, for $k \ge 2$, in the graph $G = (V, E)$ is denoted by $\psi_k(G)$~\cite{BJK13}.
Clearly, $\psi_2(G) = |V| - \alpha(G)$, where $\alpha(G)$ is the independent number, that is, the maximum cardinality of an independent set in $G$.
The maximum cardinality of a dissociation set in $G$, also known as the $1$-dependence number~\cite{Fav85,Fav88}, is denoted as $diss(G)$~\cite{ODF11},
and we have $\psi_3(G) = |V| - diss(G)$.
Let $P_n$, $C_n$ and $K_n$ denote a simple path, a simple cycle and a complete graph with $n$ vertices, respectively,
then $\psi_k(P_n) = \lfloor\frac nk\rfloor$, $\psi_k(C_n) = \lceil \frac nk\rceil$ and $\psi_k(K_n) = n-k+1$~\cite{ACB12,BJK13}.
When $G$ is some special graph~\cite{BKK11,BJK13,JT13,BKS14,Jak15}, the exact value of $\psi_k(G)$, for some values of $k$, can also be computed in polynomial time.
For the cases where the exact value of $\psi_k(G)$ is unable to be computed in polynomial time,
there are works proving several lower and/or upper bounds on $\psi_k(G)$;
to name a few, Bre{\v s}ar et al.~\cite{BKK11} proved that $\psi_3(G) \le \frac {2|V|+|E|}6$ and $\psi_k(G) \le |V| - \frac {k-1}k \sum_{v\in V} \frac 2{1+d(v)}$,
where $d(v)$ is the degree of $v$ in $G$;
Bre{\v s}ar et al.~\cite{BJK13} showed that $\psi_k(G) \ge \frac {d-k+2}{2d-k+2} |V|$ for $d$-regular graphs when $d \ge k-1$.

The {\sc P$k$VC} problem is NP-hard for every $k \ge 2$~\cite{BKK11,ACB12}.
From the inapproximability (hardness of approximation) perspective,
the {\sc VC} ({\it i.e.}, {\sc P$2$VC}) problem is APX-complete even on cubic graphs~\cite{AK00};
it cannot be approximated within $10\sqrt{5} - 21 \approx 1.3606$ unless P = NP~\cite{DS05} and
it cannot be approximated within any constant factor less than $2$~\cite{KR08} under the unique game conjecture~\cite{Kho02}.
%
Bre{\v s}ar et al.~\cite{BKK11} proved that for any $\rho \ge 1$,
a $\rho$-approximation for the {\sc P$k$VC} problem implies a $\rho$-approximation for the {\sc VC} problem.
It follows that it is NP-hard to approximate the {\sc P$k$VC} problem, for every $k \ge 3$, within $1.3606$ too, unless P = NP.
%
Recall that the maximum dissociation set problem is NP-hard even on (some sub-classes of) planar bipartite graphs~\cite{Yan81,BCL04,ODF11};
the {\sc P$3$VC} problem is shown NP-hard on cubic planar graphs with girth $3$~\cite{TY13}.
The {\sc P$4$VC} problem is proven APX-complete on cubic bipartite graphs and on $K_{1,4}$-free graphs~\cite{DMM15}.

From the approximation algorithm perspective,
the simple greedy algorithm, that iteratively takes all the $k$ vertices from a $k$-path in the remaining graph until there is no $k$-path left,
is a $k$-approximation for the {\sc P$k$VC} problem, for every $k \ge 2$.
Furthermore, the VC ({\it i.e.}, {\sc P$2$VC}) problem admits a $\left(2 - {\rm \Theta}\left(\frac 1{\log |V|}\right)\right)$-approximation~\cite{Kar09};
the {\sc P$3$VC} problem admits a primal-dual $2$-approximation~\cite{TZ11};
the {\sc P$4$VC} problem admits a primal-dual $3$-approximation~\cite{CCC14}.

On $d$-regular graphs, where $d \ge 3$, the {\sc P$k$VC} problem can be approximated better, for every $k \ge 2$.
The approximation ratio for the {\sc VC} ({\it i.e.}, {\sc P$2$VC}) problem can be reduced to
$\min\Big\{2 - \frac 5{d+3} + \epsilon$~\cite{BF95}, $2 - \frac {(2 - o(1))\log\log d}{\log d}$~\cite{Hal00}$\Big\}$;
Ries et al.~\cite{RSU15} gave a $\left(2 - \frac 1d + \frac {4d-2}{3d|V|}\right)$-approximation for {\sc P$3$VC}
and Devi et al.~\cite{DMM15} presented a greedy $\frac {\lfloor d/2\rfloor (2d - 2)}{(\lfloor d/2\rfloor + 1) (d - 2)}$-approximation for {\sc P$4$VC}
(using the lower bound $\frac {d-k+2}{2d-k+2} |V|$ given by Bre{\v s}ar et al.~\cite{BJK13}).
Ries et al.~\cite{RSU15} also proposed a $\frac {2d - k + 2}{d - k + 2}$-approximation for {\sc P$k$VC} when $1 \le k-2 < d \le 2(k-2)$.

More specifically on cubic graphs, Tu and Yang \cite{TY13} gave a $\frac {25}{16} (= 1.5625)$-approximation for the {\sc P$3$VC} problem;
Ries et al.~\cite{RSU15} claimed that their approximation algorithm for the {\sc P$3$VC} problem can be reduced to $1.389 + o(1)$.
Table~\ref{tab01} summarizes the best approximation ratios prior this work for the {\sc P$k$VC} problem, for $k \ge 2$.

\begin{table}[ht]
\caption{The best prior approximation ratios for the {\sc P$k$VC} problem.\label{tab01}}
\begin{center}
\begin{tabular}{c||c||c|c|c}
{\sc P$k$VC}	& General graphs
					& Cubic 					& $4$-regular 					& $d$-regular ($d \ge 5$)\\
\hline
\hline
$k=2$ 			& $2 - {\rm \Theta}\left(\frac 1{\log |V|}\right)$~\cite{Kar09}
					& \multicolumn{3}{c}{$\min\Big\{2 - \frac 5{d+3} + \epsilon$~\cite{BF95}, $2 - \frac {(2 - o(1))\log\log d}{\log d}$~\cite{Hal00}$\Big\}$}\\[4pt]
\hline
$k=3$ 			& $2$~\cite{TZ11}
					& $1.389+o(1)$~\cite{RSU15}	& $\frac 74 +o(1)$~\cite{RSU15} & $2 - \frac 1d + \frac {4d-2}{3d|V|}$~\cite{RSU15}\\[4pt]
\hline
$k=4$ 			& $3$~\cite{CCC14}
					& $2$~\cite{DMM15}		& $2$~\cite{DMM15}			& $\frac {\lfloor d/2\rfloor (2d - 2)}{(\lfloor d/2\rfloor + 1) (d - 2)}$~\cite{DMM15}\\[4pt]
\hline
$k \ge 5$ 		& $k$
					& n.a.			& \multicolumn{2}{c}{$\frac {2d - k + 2}{d - k + 2}$, when $k-2 < d \le 2(k-2)$~\cite{RSU15}}\\
\end{tabular}
\end{center}
\end{table}

In this paper, we aim to design improved approximation algorithms for the {\sc P$k$VC} problem, for $k \ge 3$, on $d$-regular graphs.
To this purpose, in Section 2 we first employ an existing polynomial time graph defective coloring algorithm
to design a simple yet effective approximation algorithm for the {\sc P$k$VC} problem,
and we are able to show that its approximation ratio is $\frac {\lfloor d/2\rfloor (2d - k + 2)}{(\lfloor d/2\rfloor + 1) (d - k + 2)}$, when $1 \le k - 2 < d$.
This beats all the best prior approximation results except for {\sc P$4$VC},
where it ties with the approximation by Devi et al.~\cite{DMM15} and in particular at $2$ on cubic graphs and on $4$-regular graphs.
In Section 3, we first prove a lower bound on $\psi_4(G)$ when $G$ is $d$-regular,
then integrate the graph defective coloring algorithm and the current best approximation algorithm for the {\sc MIS} problem on degree-bounded graphs,
to design a $\frac {15}8$-approximation for {\sc P$4$VC} on cubic graphs.
When $d \ge 4$ is even, we show in Section 4 how to compute $\frac d2$ $4$-path vertex covers in $G$ and
selecting the minimum one gives a $\frac {(3d-2)(2d-2)}{(3d+4)(d-2)}$-approximation for {\sc P$4$VC}.
This turns out to be a $\frac {15}8$-approximation for {\sc P$4$VC} on $4$-regular graphs.
Also in Section 4, we are able to provide a better analysis to show that the algorithm is actually a $1.852$-approximation on $4$-regular graphs,
and construct an instance to show that the ratio $1.852$ is almost tight for {\sc Approx2} $4$-regular graphs.
Lastly, in Section 5, we propose a $\frac {d^2}{d^2 - d + 1}$-approximation algorithm for {\sc P$4$VC} on $d$-regular bipartite graphs.
We conclude the paper in Section 6 with some remarks.

\section{An approximation for {\sc P$k$VC} on $d$-regular graphs}
We consider the {\sc P$k$VC} problem for $k \ge 3$, for which the best approximation algorithms prior our work are summarized in Table~\ref{tab01}.

As a consequence of Lov{\' a}sz's graph decomposition~\cite{Lov66},
Cowen and Jesurum~\cite{CGJ97} have proved the following result on {\em defective coloring} for any graph of maximum degree $\Delta$,
where a {\em defective} $(p, q)$-coloring colors the vertices of the graph using $p$ colors
such that each vertex is adjacent to at most $q$ the same colored neighbors.
A $(p, 0)$-coloring is the classic vertex coloring using $p$ colors.

\begin{theorem}{\rm \cite{CGJ97}}
\label{thm01}
Any graph $G = (V, E)$ of maximum degree $\Delta$ can be $\left(p, \left\lfloor \frac {\Delta}p \right\rfloor\right)$-colored in $O(\Delta |E|)$ time.
\end{theorem}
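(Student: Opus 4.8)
The plan is to prove \cref{thm01} by the algorithmic form of Lov\'asz's decomposition argument: a local re-coloring procedure driven by a potential function. I would take as potential $\Phi$ the number of \emph{monochromatic} edges of the current coloring, i.e.\ edges whose two endpoints share a color. Starting from the trivial $p$-coloring that assigns color $1$ to every vertex we have $0 \le \Phi \le |E|$, so it suffices to show that whenever the current coloring is \emph{not} a valid $\left(p, \lfloor \Delta/p\rfloor\right)$-coloring we can recolor a single vertex, in $O(\Delta)$ time, so that $\Phi$ drops by at least one; since $\Phi$ is a nonnegative integer, the process must then halt, and it halts at a valid coloring.

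The local step is the heart of the argument. Call a vertex $v$ \emph{bad} if it has strictly more than $\lfloor \Delta/p\rfloor$ neighbors of its own color; a coloring is valid precisely when it has no bad vertex. If $v$ is bad, then since its $\deg(v) \le \Delta$ neighbors are distributed among the $p$ color classes, the pigeonhole principle yields a color $j$ held by at most $\lfloor \deg(v)/p\rfloor \le \lfloor \Delta/p\rfloor$ neighbors of $v$; recolor $v$ with such a $j$. This removes more than $\lfloor \Delta/p\rfloor$ monochromatic edges incident to $v$, adds at most $\lfloor \Delta/p\rfloor$ of them, and changes no edge not incident to $v$, so $\Phi$ decreases by at least one. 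Hence after at most $|E|$ recolorings no vertex is bad, which is exactly a defective $\left(p,\lfloor\Delta/p\rfloor\right)$-coloring.

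The remaining point, and I expect the only delicate one, is to realize the loop within the claimed $O(\Delta|E|)$ bound rather than something like $O(|E|^2)$. I would maintain, for each vertex $u$, an array of length $p$ recording how many neighbors of $u$ currently carry each color, together with a list $L$ of the bad vertices. Each iteration pops some $v$ from $L$, reads off a least-used color $j$ among $v$'s neighbors, and recolors $v$, all in $O(\deg(v)+p)=O(\Delta)$ time; recoloring $v$ alters only the color-count arrays of $v$ and its $\le \Delta$ neighbors, so only those $O(\Delta)$ vertices can switch between bad and good, and $L$ is refreshed in $O(\Delta)$ time as well. With at most $|E|$ iterations this is $O(\Delta|E|)$ after an $O(p|V|+|E|)$ initialization. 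One may assume $p \le \Delta+1$ (if $p > \Delta+1$ then $\lfloor\Delta/p\rfloor = 0$ and any proper $(\Delta+1)$-coloring already works, using only a subset of the $p$ colors) and discard isolated vertices, so the $O(p|V|)$ bookkeeping is absorbed into $O(\Delta|E|)$; the two small checks to verify are that the chosen $j$ is \emph{strictly} safer than $v$'s current color (guaranteed, since the current color has more than $\lfloor\Delta/p\rfloor$ neighbors while $j$ has at most $\lfloor\Delta/p\rfloor$) and that $v$ itself becomes good after the recoloring (immediate from the same bound).
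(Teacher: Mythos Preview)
The paper does not actually prove \cref{thm01}; it quotes the result from \cite{CGJ97} as a black box (noting only that it is ``a consequence of Lov\'asz's graph decomposition~\cite{Lov66}''), so there is no in-paper proof to compare against. Your argument is the standard Lov\'asz local-recoloring proof: the potential $\Phi$ counting monochromatic edges, the pigeonhole step showing a bad vertex always has a strictly safer color, and the $O(\Delta)$-per-iteration bookkeeping with at most $|E|$ iterations. This is correct and is precisely the algorithmic form of the Lov\'asz decomposition the paper alludes to, so your proof fills in exactly what the citation covers.
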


Let $G = (V, E)$ be a $d$-regular graph.
Using Theorem~\ref{thm01} by setting $p = \lfloor \frac d2 \rfloor + 1$, we have a defective $(p, 1)$-coloring for the graph $G$.
In this defective $(p, 1)$-coloring,
suppose these $p$ colors are $1, 2, \ldots, p$;
let $V^i$ denote the subset of the vertices colored $i$.
Then clearly,
\begin{enumerate}
\parskip=0pt
\item[1)]
	$\{V^1, V^2, \ldots, V^p\}$ is a partition of the vertex set $V$, and
\item[2)]
	the subgraph induced on $V^i$, $G[V^i]$, does not contain any $3$-path,
	suggesting $F^i = V - V^i$ is a $3$-path vertex cover (and thus it is also a $k$-path vertex cover for any $k \ge 3$), for every $i = 1, 2, \ldots, p$.
\end{enumerate}

It follows that the minimum among $F^1, F^2, \ldots, F^p$, denoted as $F^{\min}$, has size
\begin{equation}
\label{eq1}
|F^{\min}| \le \left(1 - \frac 1p\right) |V|.
\end{equation}

Recall that when $d \ge k-1$, Bre{\v s}ar et al.~\cite{BJK13} have proved the following lower bound on $\psi_k(G)$:
\begin{equation}
\label{eq2}
\psi_k(G) \ge \frac {d-k+2}{2d-k+2} |V|.
\end{equation}
Therefore, $F^{\min}$ turns out to be an approximate solution within ratio
\[
\frac {|F^{\min}|}{\psi_k(G)} \le \frac {(p - 1) (2d - k + 2)}{p (d - k + 2)} = \frac {\lfloor d/2 \rfloor (2d - k + 2)}{(\lfloor d/2 \rfloor + 1) (d - k + 2)}.
\]

That is, using the existing graph defective coloring algorithm,
we can design an algorithm, denoted as {\sc DC}, to first compute in $O(d^2 |V|)$ time a defective $(\lfloor \frac d2 \rfloor + 1, 1)$-coloring
for the input $d$-regular graph $G = (V, E)$,
then in $O(d)$ time find the color $i$ with the most vertices,
and last return $F$ containing all the vertices not colored $i$.
See Figure~\ref{fig01} for a high-level description of the algorithm {\sc DC}.
We conclude with Theorem~\ref{thm02}.

\begin{theorem}
\label{thm02}
The algorithm {\sc DC} for the {\sc P$k$VC} problem on $d$-regular graphs, where $1 \le k - 2 < d$,
is an $O(d^2 |V|)$-time $\frac {\lfloor d/2 \rfloor (2d - k + 2)}{(\lfloor d/2 \rfloor + 1) (d - k + 2)}$-approximation.
\end{theorem}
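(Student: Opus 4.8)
The plan is to assemble the three ingredients already developed above: the defective-coloring guarantee of Theorem~\ref{thm01}, the counting bound \eqref{eq1}, and the lower bound \eqref{eq2}. First I would verify correctness, i.e.\ that the set $F$ returned by {\sc DC} is genuinely a $k$-path vertex cover. Applying Theorem~\ref{thm01} with $p = \lfloor d/2 \rfloor + 1$, the one point that needs a moment of care is checking that the defect parameter $\lfloor d/p \rfloor$ equals exactly $1$: since $1 \le k-2 < d$ forces $d \ge 2$, one has $d/2 < p \le d$, hence $1 \le d/p < 2$ and therefore $\lfloor d/p \rfloor = 1$. Consequently every color class $V^i$ induces a subgraph of maximum degree at most $1$ — a disjoint union of edges and isolated vertices — which contains no $P_3$, and a fortiori no $P_k$ for any $k \ge 3$. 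Thus each $F^i = V - V^i$, and in particular the returned $F = F^{\min}$, is a $k$-path vertex cover.

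Next I would bound $|F^{\min}|$ from above. Because $\{V^1, \dots, V^p\}$ partitions $V$, some color class contains at least $|V|/p$ vertices, so the smallest complement satisfies $|F^{\min}| \le (1 - 1/p)|V|$, which is precisely \eqref{eq1}; note $p \ge 2$ here so this is meaningful. Pairing this with $\psi_k(G) \ge \frac{d-k+2}{2d-k+2}|V|$ from \eqref{eq2} — applicable since $d > k-2$ with $d, k$ integers gives $d \ge k-1$ — yields
\[
\frac{|F^{\min}|}{\psi_k(G)} \;\le\; \frac{(1 - 1/p)|V|}{\frac{d-k+2}{2d-k+2}|V|} \;=\; \frac{(p-1)(2d-k+2)}{p(d-k+2)},
\]
and substituting $p - 1 = \lfloor d/2 \rfloor$ and $p = \lfloor d/2 \rfloor + 1$ produces the claimed ratio.

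Finally I would account for the running time. Theorem~\ref{thm01} delivers the coloring in $O(\Delta |E|)$ time, and since $G$ is $d$-regular we have $\Delta = d$ and $|E| = \frac{d}{2}|V|$, i.e.\ $O(d^2 |V|)$; scanning the $p = O(d)$ color classes to select the largest and then emitting its complement adds only $O(|V|)$. Hence the total is $O(d^2 |V|)$, which finishes the argument.

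I do not expect a real obstacle: the proof is essentially bookkeeping that glues together results established earlier in the excerpt. The only genuinely load-bearing verification is that $\lfloor d/p \rfloor = 1$ for the chosen $p$ — if this were $0$ the coloring would be the ordinary chromatic coloring (too many colors needed), and if it were $\ge 2$ the color classes could contain a $P_3$ — so I would make sure to state that check explicitly; everything else is a direct substitution.
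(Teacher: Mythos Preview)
Your proposal is correct and follows essentially the same approach as the paper: apply Theorem~\ref{thm01} with $p=\lfloor d/2\rfloor+1$, use the averaging bound \eqref{eq1} and the lower bound \eqref{eq2}, and divide. Your explicit verification that $\lfloor d/p\rfloor=1$ is a nice touch the paper glosses over, but otherwise the arguments coincide.
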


\begin{figure}[ht]
\begin{center}
\framebox{
\begin{minipage}{4.5in}
	The algorithm {\sc DC} for {\sc P$k$VC} on $d$-regular graphs:
	\begin{description}
	\item[Step 1.]
		Set $p = \lfloor \frac d2 \rfloor + 1$ and compute a defective $(p, 1)$-coloring;
	\item[Step 2.]
		let $i$ denote the color with the most vertices,\\
		and $V^i$ denote the set of vertices colored $i$;
	\item[\hspace{0.15in}]
		return $F = V - V^i$.
	\end{description}
\end{minipage}}
\end{center}
\caption{The approximation algorithm {\sc DC} for {\sc P$k$VC} on $d$-regular graphs.\label{fig01}}
\end{figure}

On $d$-regular graphs, our algorithm {\sc DC} for the {\sc P$k$VC} problem for $k \ge 3$ beats all the best prior approximation results
except for $k = 4$ ours ties with the approximation by Devi et al.~\cite{DMM15}.
Table~\ref{tab02} summarizes the improvement over the corresponding entries in Table~\ref{tab01}.
From this table, we see that currently there is no published approximation result for {\sc P$k$VC} on $d$-regular graphs such that $d \le k-2$ (and $k \ge 5$).
For {\sc P$4$VC}, the approximation ratios are constants strictly less than $2$ for all $d \ge 5$, while they are $2$ for both $d = 3, 4$.
In the next two sections, we design a $\frac {15}8$-approximation for {\sc P$4$VC} on cubic graphs and
a $1.852$-approximation for {\sc P$4$VC} on $4$-regular graphs, respectively.

\begin{table}[ht]
\caption{The updated approximation ratios for {\sc P$k$VC}, $k \ge 3$, on regular graphs.\label{tab02}}
\begin{center}
\begin{tabular}{c||c|c|c}
{\sc P$k$VC}	& Cubic 					& $4$-regular 					& $d$-regular ($d \ge 5$)\\
\hline
\hline
$k=3$ 			& $\frac 54$~(Theorem~\ref{thm02})	& $\frac {14}9$~(Theorem~\ref{thm02}) & $\frac {\lfloor d/2 \rfloor (2d - 1)}{(\lfloor d/2 \rfloor + 1) (d - 1)}$~(Theorem~\ref{thm02})\\[4pt]
\hline
$k=4$ 			& $2$~\cite{DMM15}		& $2$~\cite{DMM15}			& $\frac {\lfloor d/2\rfloor (2d - 2)}{(\lfloor d/2\rfloor + 1) (d - 2)}$~\cite{DMM15}\\[4pt]
\hline
$k \ge 5$ 		& n.a.			& \multicolumn{2}{c}{$\frac {\lfloor d/2 \rfloor (2d - k + 2)}{(\lfloor d/2 \rfloor + 1) (d - k + 2)}$, when $k-2 < d$~(Theorem~\ref{thm02})}\\
\end{tabular}
\end{center}
\end{table}

\section{{\sc P$4$VC} on cubic graphs}
In this and the next sections, we will design improved approximation algorithms for {\sc P$4$VC} on cubic graphs and on $4$-regular graphs,
with performance ratios $\frac {15}8$ and $1.852$, respectively.
With them, all the approximation ratios for {\sc P$4$VC} on $d$-regular graphs become strictly less than $2$.

Let $G = (V, E)$ denote the input $d$-regular graph (we set $d = 3$ later after we develop a lower bound on general $d$).
We first examine some structural properties associated with the optimal $4$-path vertex covers in $G$.

\begin{lemma}
\label{lemma01}
Let $G = (V, E)$ be a $d$-regular graph.
Then, $\psi_4(G) \ge \frac {d-1}{2d} |V| - \frac 1{d^2} f$,
where $f$ denotes the total number of vertices on the $3$-cycles in $G[V - F^*]$ and $F^*$ is an optimal $4$-path vertex cover in $G$.
\end{lemma}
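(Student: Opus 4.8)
The plan is to exploit how restricted the uncovered subgraph must be. Write $n = |V|$, $\psi = \psi_4(G) = |F^*|$, and $H = G[V - F^*]$, so that $|V(H)| = n - \psi$. Since $H$ contains no $P_4$, a short spanning-tree argument shows that every connected component of $H$ is either a star $K_{1,\ell}$ with $\ell \ge 0$ (isolated vertices and single edges included) or a triangle $K_3$: in a spanning tree of a component the longest path has at most $3$ vertices, so the tree is a star, and any additional edge must join two of its leaves, creating a $P_4$ unless the component has at most three vertices. In particular every $3$-cycle of $H$ is an entire triangle component, so if $t$ denotes the number of triangle components then $f = 3t$. Let $c$ denote the number of star components.

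First I would double count the edges between $V - F^*$ and $F^*$. Each vertex of $F^*$ has degree $d$, so at most $d\psi$ such edges exist. On the other hand a star on $\ell+1$ vertices has $\ell$ internal edges and a triangle has $3$, so $|E(H)| = (|V(H)| - f - c) + f = |V(H)| - c$, and the number of edges from $H$ to $F^*$ equals $d|V(H)| - 2|E(H)| = (d-2)|V(H)| + 2c$. Hence $(d-2)(n - \psi) + 2c \le d\psi$, which rearranges to $\psi \ge \frac{(d-2)n + 2c}{2d-2}$; the choice $c \ge 0$ already recovers Bre{\v s}ar et al.'s bound $\psi_4(G) \ge \frac{d-2}{2d-2}|V|$.

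Second, the centre of a star component has degree at most $d$ in $G$, so each star component has at most $d+1$ vertices, while each triangle has exactly $3$; therefore $n - \psi = |V(H)| \le (d+1)c + f$, i.e. $c \ge \frac{n - \psi - f}{d+1}$. Substituting this lower bound on $c$ into $\psi \ge \frac{(d-2)n + 2c}{2d-2}$ (whose right-hand side is increasing in $c$) and clearing denominators eliminates $c$; using the identities $(2d-2)(d+1) + 2 = 2d^2$ and $(d-2)(d+1) + 2 = d(d-1)$, the inequality collapses to $2d^2\psi \ge d(d-1)n - 2f$, which is exactly $\psi_4(G) \ge \frac{d-1}{2d}|V| - \frac{1}{d^2}f$.

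The only genuinely structural step is the claim that a $P_4$-free graph is a disjoint union of stars and triangles; everything after it is two elementary counting inequalities. I expect the main point needing care is not any deep idea but the bookkeeping that isolates the triangle contribution cleanly: the penalty term $\frac{1}{d^2}f$ appears precisely because triangle components are the one kind of component whose vertex count cannot be bounded through a centre of degree at most $d$. (Refining the first inequality by also counting the edges inside $F^*$ only strengthens it and turns out to be unnecessary.)
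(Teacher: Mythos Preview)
Your proof is correct and follows essentially the same approach as the paper: classify the components of $G[V-F^*]$, count the bicolor edges between $F^*$ and $V-F^*$, and bound from above by $d|F^*|$. The only cosmetic difference is that you collect all non-triangle components under the single heading ``star'' and introduce the star-component count $c$ as an intermediate variable to be eliminated, whereas the paper splits into five component types and lower-bounds the per-vertex crossing-edge rate of each non-triangle type by $\frac{d(d-1)}{d+1}$ (the rate for $K_{1,d}$); unpacking either calculation yields the identical inequality $2d^2\psi \ge d(d-1)|V| - 2f$.
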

\begin{proof}
Let $F^*$ be an optimal $4$-path vertex cover in $G$;
then the subgraph $G[V - F^*]$ does not contain any $4$-path.
In other words, all the connected components of $G[V - F^*]$ can be classified into the following five kinds:
\begin{enumerate}
\parskip=0pt
\item[A)]
	a $1$-path (also called a {\em singleton}),
\item[B)]
	a $2$-path,
\item[C)]
	a $3$-path,
\item[E)]
	a $K_{1,\ell}$ claw, for $\ell = 3, 4, \ldots, d$, and
\item[F)]
	a $C_3$ cycle (also called a {\em triangle});
\end{enumerate}
let $a, b, c, e, f$ denote the total numbers of vertices in these five kinds of components, respectively.
It follows that
\begin{equation}
\label{eq3}
|V| = |F^*| + a + b + c + e + f.
\end{equation}

Because $G$ is $d$-regular, the number of edges connecting a vertex of $F^*$ and a vertex of $V - F^*$ is at least
\[
d a + (d-1) b + \left(d - \frac 43\right) c + \left(\min_{3 \le \ell \le d} \left\{d - 2 + \frac 2{\ell + 1}\right\}\right) e + (d-2) f.
\]
Given that each vertex of $F^*$ can be incident with at most $d$ such edges,
and that $\min\limits_{3 \le \ell \le d} \left\{d - 2 + \frac 2{\ell + 1}\right\}$ achieves at $\ell = d$, we have
\[
d |F^*| \ge \left(d - 2 + \frac 2{d+1}\right) (a + b + c + e) + (d-2) f = \frac {d (d-1)}{d+1} (|V| - |F^*|) - \frac 2{d+1} f.
\]
It follows from Eq.~(\ref{eq3}) that, when $G$ is $d$-regular,
\begin{equation}
\label{eq4}
\psi_4(G) = |F^*| \ge \frac {d-1}{2d} |V| - \frac 1{d^2} f.
\end{equation}
This proves the lemma.
\end{proof}

We now consider only $d = 3$, that is, $G$ is cubic.

In this case, Lemma~\ref{lemma01} (or Eq.~(\ref{eq4})) states that $\psi_4(G) \ge \frac 13 |V| - \frac 19 f$.
Therefore, one sees that when the number of triangles in $G[V - F^*]$ is small,
we can expect this new lower bound to be more effective, compared against the lower bound $\frac 14 |V|$ stated in Eq.~(\ref{eq2}).
For example, when $f \le \frac 35 |V|$, we have
\[
\psi_4(G) \ge \frac 19 \left(3 - \frac 35\right) |V| = \frac 4{15} |V|.
\]
It follows that the defective $(2, 1)$-coloring for $G$ gives a $4$-path vertex cover $F^{\min}$, see Eq.~(\ref{eq1}), satisfying
\begin{equation}
\label{eq5}
\frac {|F^{\min}|} {\psi_4(G)} \le \frac 12 \times \frac {15}4 = \frac {15}8.
\end{equation}

On the other hand, when $f > \frac 35 |V|$, that is, there are a considerable number of triangle components in $G[V - F^*]$,
we will construct a new graph denoted as ${\cal G} = (T, R)$ from the input graph $G = (V, E)$ as follows.
For every triangle in $G$, we create a distinct vertex of $T$ in ${\cal G}$;
two vertices of $T$ are adjacent if and only if the corresponding two triangles of $G$ share a common edge or they are connected by an edge in $G$.%
\footnote{We remark that two distinct triangles of the cubic graph $G$ either share exactly one edge, or have no vertex in common.}
One may easily verify that the graph ${\cal G} = (T, R)$ can be constructed in $O(|V|)$ time, $|T| \in O(|V|)$,
and the maximum degree of the vertices of ${\cal G}$ is at most $3$.
Moreover, a subgraph of $G$ that is a collection of triangle components one-to-one corresponds to an independent set of ${\cal G}$.

Recall the best approximation algorithm for the {\sc MIS} problem on degree-$\Delta$ graphs by Berman and Fujito~\cite{BF95}
has a performance ratio $\frac {\Delta + 3}5 + \epsilon$ for any small positive $\epsilon$.
We next run this approximation algorithm on ${\cal G}$ to obtain an independent set $I$ of ${\cal G}$,
and therefore (roughly, by ignoring $\epsilon$)
\[
|I| \ge \frac 56 \alpha({\cal G}) \ge \frac 5{18} f,
\]
where $\alpha({\cal G})$ is the independence number of ${\cal G}$, which corresponds to the maximum number of non-adjacent triangles in $G$.
Let $F$ denote the set of the vertices of $V$ not on the triangles of $I$;
then $F$ is a solution to the {\sc P$4$VC} problem on $G$, and its cardinality is
\begin{equation}
\label{eq6}
|F| \le |V| - \frac 56 f = \frac 58 |V| + \frac 38 |V| - \frac 56 f \le \frac 58 |V| - \frac 5{24} f,
\end{equation}
where the last inequality is due to $f > \frac 35 |V|$.

Combining Eq.~(\ref{eq6}) and the lower bound in Eq.~(\ref{eq4}), we have
\begin{equation}
\label{eq7}
\frac {|F|} {\psi_4(G)} \le \frac 58 \times 3 = \frac {15}8.
\end{equation}

From Eqs.~(\ref{eq5}) and (\ref{eq7}),
we can design an algorithm, denoted as {\sc Approx1}, to first compute in $O(|V|)$ time a defective $(2, 1)$-coloring for the input cubic graph $G = (V, E)$,
then in $O(1)$ time find the color with less vertices and set $F^{\min}$ to contain all these vertices.
It also constructs the triangle graph ${\cal G} = (T, R)$ from $G = (V, E)$ and
applies the best approximation algorithm for {\sc MIS} to compute an independent set $I$ in ${\cal G}$,
then it sets $F$ to contain all the vertices of $V$ not on the triangles of $I$.
Lastly, it returns the smaller one between $F^{\min}$ and $F$ as the final solution.
See Figure~\ref{fig02} for a high-level description of the algorithm {\sc Approx1},
of which the running time is dominated by the running time of the best approximation algorithm for the {\sc MIS} problem on degree-$3$ graphs.
\begin{figure}[ht]
\begin{center}
\framebox{
\begin{minipage}{5.0in}
	The algorithm {\sc Approx1} for {\sc P$4$VC} on cubic graphs:
	\begin{description}
	\item[Step 1.]
		Compute a defective $(2, 1)$-coloring for the input graph $G$;
	\item[Step 2.]
		find the color with less vertices and set $F^{\min}$ to contain them;
	\item[Step 3.]
		construct the triangle graph ${\cal G}$ from $G$;
	\item[Step 4.]
		call the $\frac 65$-approximation to compute an independent set $I$ in ${\cal G}$,\\
		and set $F$ to contain all the vertices not on the triangles of $I$;
	\item[Step 5.]
		return the smaller one between $F^{\min}$ and $F$.
	\end{description}
\end{minipage}}
\end{center}
\caption{A high-level description of the approximation algorithm {\sc Approx1} for {\sc P$4$VC} on cubic graphs.\label{fig02}}
\end{figure}
We thus conclude with Theorem~\ref{thm03}.

\begin{theorem}
\label{thm03}
The algorithm {\sc Approx1} is a $\frac {15}8$-approximation for the {\sc P$4$VC} problem on cubic graphs.
\end{theorem}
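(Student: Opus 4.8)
\noindent The plan is to combine the specialized lower bound from Lemma~\ref{lemma01} with the two candidate solutions that {\sc Approx1} produces, and to argue by a case distinction on the quantity $f$ -- the total number of vertices lying on triangle components of $G[V-F^*]$ for an optimal $4$-path vertex cover $F^*$. Note that $f$ is only a device of the analysis: the algorithm never computes it and simply outputs the smaller of $F^{\min}$ and $F$, so it suffices to show that in the regime $f\le\frac35|V|$ the solution $F^{\min}$ already meets the ratio, while in the regime $f>\frac35|V|$ the solution $F$ does.

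First I would record what {\sc Approx1} guarantees unconditionally. Steps~1--2 yield, via Theorem~\ref{thm01} with $p=2$ and Eq.~(\ref{eq1}), a $4$-path vertex cover $F^{\min}$ with $|F^{\min}|\le\frac12|V|$. For Steps~3--4 I would first verify the structural claim underlying the reduction: the triangle graph ${\cal G}=(T,R)$ has maximum degree at most $3$ (using the footnote fact that two distinct triangles of a cubic graph share exactly one edge or are vertex-disjoint), and a family of pairwise vertex-disjoint triangles of $G$ with no $G$-edge between them corresponds exactly to an independent set of ${\cal G}$; in particular the triangle components of $G[V-F^*]$ are $f/3$ such triangles, so $\alpha({\cal G})\ge f/3$. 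Running the Berman--Fujito $\left(\frac65+\epsilon\right)$-approximation for {\sc MIS} on ${\cal G}$ then gives $I$ with $|I|\ge\frac56\alpha({\cal G})\ge\frac5{18}f$ (absorbing $\epsilon$ as the paper does elsewhere), and deleting from $V$ every vertex not on a triangle of $I$ leaves a disjoint union of triangles, hence a $P_4$-free graph, so the resulting $F$ is feasible with $|F|\le|V|-\frac56 f$.

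Then I would carry out the two cases. If $f\le\frac35|V|$, Lemma~\ref{lemma01} with $d=3$ gives $\psi_4(G)\ge\frac13|V|-\frac19 f\ge\frac19\left(3-\frac35\right)|V|=\frac4{15}|V|$, whence $|F^{\min}|/\psi_4(G)\le(1/2)/(4/15)=\frac{15}8$, which is Eq.~(\ref{eq5}). If $f>\frac35|V|$, then $\frac38|V|<\frac58 f$, so $|F|\le|V|-\frac56 f=\frac58|V|+\frac38|V|-\frac56 f\le\frac58|V|-\frac5{24}f$, which is Eq.~(\ref{eq6}); since $\frac58|V|-\frac5{24}f=\frac58\left(|V|-\frac13 f\right)$ and $\psi_4(G)\ge\frac13|V|-\frac19 f=\frac13\left(|V|-\frac13 f\right)$ with $|V|-\frac13 f>0$, the ratio $|F|/\psi_4(G)$ is at most $(5/8)/(1/3)=\frac{15}8$, which is Eq.~(\ref{eq7}). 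In either case the minimum of the two candidates achieves ratio $\frac{15}8$, and since the total running time is dominated by the {\sc MIS} call the theorem follows.

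I would expect the only non-routine point to be the structural correspondence between the triangle components of $G[V-F^*]$ and independent sets of ${\cal G}$: one must check that distinct triangle components of $G[V-F^*]$ are non-adjacent in $G$ (they are, since any $G$-edge between two vertices of $V-F^*$ already lies in $G[V-F^*]$ and so cannot join different components) and are vertex-disjoint, so that they form an independent set of ${\cal G}$ and hence $\alpha({\cal G})\ge f/3$ holds; and that feasibility of $F$ needs nothing beyond ``a disjoint union of triangles is $P_4$-free''. The $\epsilon$ in the Berman--Fujito ratio enters only through the second case and is swept under the rug exactly as elsewhere in the paper; stating the bound with an additive $+o(1)$ would make this fully rigorous.
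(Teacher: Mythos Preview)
Your proposal is correct and follows essentially the same approach as the paper: the same case split at $f=\tfrac35|V|$, the same use of Eq.~(\ref{eq1}) in the small-$f$ regime, and the same factoring $\tfrac58|V|-\tfrac5{24}f=\tfrac58(|V|-\tfrac13 f)$ against $\psi_4(G)\ge\tfrac13(|V|-\tfrac13 f)$ in the large-$f$ regime. You are in fact slightly more careful than the paper in spelling out why the $f/3$ triangle components of $G[V-F^*]$ form an independent set of ${\cal G}$ and why $|V|-\tfrac13 f>0$; the only small elision is the implicit step $|F|=|V|-3|I|\le |V|-\tfrac56 f$, which is immediate from vertex-disjointness of the triangles in $I$.
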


\section{{\sc P$4$VC} on $4$-regular graphs}
The design ideas in the above algorithm {\sc Approx1} for cubic graphs do not trivially extend to $4$-regular graphs,
for one of the most important reasons that there are many more configurations for two triangles being adjacent (due to degree $4$) and
the maximum degree of the similarly constructed triangle graph ${\cal G} = (T, R)$ can be as high as $9$.
Such a high maximum degree voids the effectiveness of the best approximation algorithm for {\sc MIS} on degree-$9$ graphs.

We present next an approximation algorithm, denoted as {\sc Approx2}, for {\sc P$4$VC} on $d$-regular graphs when $d \ge 4$ is even,
and show that its performance ratio is $\frac {(3d-2)(2d-2)}{(3d+4)(d-2)}$.
When $d = 4$, the ratio is $\frac {15}8 = 1.875$.
We are able to provide a better but slightly more complex analysis for $d = 4$ to
show that the approximation ratio is actually no greater than $1.852$;
we also use an instance to show that the ratio $1.852$ is almost tight for {\sc Approx2} when $d = 4$.

\subsection{An approximation algorithm when $d \ge 4$ is even}
One of the design ideas in our algorithm is borrowed from Devi et al.~\cite{DMM15}.
Let $G = (V, E)$ denote the input $d$-regular graph, where $d \ge 4$ is even.

In the algorithm {\sc Approx2},
we first compute a subset $V_1$ of vertices by iteratively adding to it a degree-$d$ vertex until no more degree-$d$ exists in the remaining graph;
then similarly and sequentially compute a subset $V_i$ of vertices by iteratively adding to it a degree-$(d - i + 1)$ vertex until
no more degree-$(d - i + 1)$ exists in the remaining graph, for $i = 2, 3, \ldots, d-2$.
Denote
\[
V_{d-1} = V - \cup_{i=1}^{d-2} V_i.
\]
The last remaining graph is $G[V_{d-1}]$, which has maximum degree $2$ and thus an optimal ({\it i.e.}, minimum) $4$-path vertex cover, denoted as $U_{d-1}$,
can be computed in $O(|V_{d-1}|)$ time.

We will prove in Theorem~\ref{thm04} that $U_{d-1} \cup \cup_{i=1}^{d-2} V_i$ is a $4$-path vertex cover of the input graph $G$,
so is $V - (V_{2i-1} \cup V_{2i})$, for each $i = 1, 2, \ldots, \frac d2 - 1$.
The algorithm {\sc Approx2} outputs the smallest among these $\frac d2$ covers as the final solution.
A high-level description of {\sc Approx2} is depicted in Figure~\ref{fig03},
and we prove in Theorem~\ref{thm04} that {\sc Approx2} is an $O(d^2 |V|)$-time $\frac {(3d-2)(2d-2)}{(3d+4)(d-2)}$-approximation
for {\sc P$4$VC} on $d$-regular graphs, when $d \ge 4$ is even.

\begin{figure}[ht]
\begin{center}
\framebox{
\begin{minipage}{5.0in}
	The algorithm {\sc Approx2} for {\sc P$4$VC} on $d$-regular graphs ($d \ge 4$ is even):
	\begin{description}
	\item[Step 1.]
		Compute $V_1$ by iteratively adding to it a degree-$d$ vertex until no more degree-$d$ exists in $G[V - V_1]$;
	\item[Step 2.]
		for $i = 2, 3, \ldots, d-2$,
		similarly and sequentially compute $V_i$ by iteratively adding to it a degree-$(d-i+1)$ vertex until
		no more degree-$(d-i+1)$ exists in $G[V - \cup_{j=1}^i V_j)]$;
	\item[Step 3.]
		set $V_{d-1} = V - \cup_{i=1}^{d-2} V_i$;\\
		find an optimal $4$-path vertex cover $U_{d-1}$ in $G[V_{d-1}]$;
	\item[Step 4.]
		return the smallest one among $U_{d-1} \cup \cup_{i=1}^{d-2} V_i$ and $V - (V_{2i-1} \cup V_{2i})$, $i = 1, 2, \ldots, \frac d2 - 1$.
	\end{description}
\end{minipage}}
\end{center}
\caption{A high-level description of the approximation algorithm {\sc Approx2} for the {\sc P$4$VC} problem on $d$-regular graphs when $d \ge 4$ is even.\label{fig03}}
\end{figure}

\begin{theorem}
\label{thm04}
For the {\sc P$4$VC} problem on $d$-regular graphs, when $d \ge 4$ is even, 
the algorithm {\sc Approx2} is an $O(d^2 |V|)$-time $\frac {(3d-2)(2d-2)}{(3d+4)(d-2)}$-approximation.
\end{theorem}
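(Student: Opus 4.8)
The plan is to establish two things: correctness (the four-or-so sets returned by {\sc Approx2} are genuinely $4$-path vertex covers of $G$) and the approximation ratio. For correctness, I would first argue that $U_{d-1} \cup \bigcup_{i=1}^{d-2} V_i$ works. The key observation is that by construction, for each $i=1,\dots,d-2$, once $V_1,\dots,V_{i}$ have all been removed, every remaining vertex has degree at most $d-i$ (otherwise it would have been absorbed into some earlier $V_j$). In particular, after removing $\bigcup_{i=1}^{d-2}V_i$, the graph $G[V_{d-1}]$ has maximum degree at most $2$, so it is a disjoint union of paths and cycles and an optimal $4$-path vertex cover $U_{d-1}$ of it is computable in linear time; adding back all of $\bigcup_{i=1}^{d-2}V_i$ to $U_{d-1}$ clearly yields a $4$-path vertex cover of $G$, since deleting it leaves exactly $G[V_{d-1}-U_{d-1}]$, which is $4$-path-free. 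For the other candidates $V-(V_{2i-1}\cup V_{2i})$, I need to show that $G[V_{2i-1}\cup V_{2i}]$ contains no $4$-path. Here the relevant structural fact is that $V_{2i-1}\cup V_{2i}$ is built by the greedy peeling: within the graph obtained after removing $V_1,\dots,V_{2i-2}$, the set $V_{2i-1}$ is a maximal set whose vertices all had degree $d-2i+2$ at the moment of selection, and similarly $V_{2i}$ for degree $d-2i+1$; I will want to show the induced subgraph on the union has a bounded-degree / claw-or-triangle-component structure (mirroring the kinds A)--F) in the proof of Lemma~\ref{lemma01}) that forbids a $P_4$. This part borrows from Devi et al.~\cite{DMM15} and I expect the $P_4$-freeness argument to be the part that needs the most care — spelling out exactly why two "degree layers" glued together cannot host a path on four vertices.

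Next, the ratio. Write $n=|V|$, $v_i=|V_i|$ and $u=|U_{d-1}|$. Because {\sc Approx2} returns the smallest candidate, its output size $\mathrm{ALG}$ satisfies simultaneously
\[
\mathrm{ALG}\le u+\sum_{i=1}^{d-2}v_i,\qquad \mathrm{ALG}\le n-(v_{2i-1}+v_{2i})\ \text{ for each } i=1,\dots,\tfrac d2-1.
\]
Summing the $\frac d2-1$ inequalities of the second type together with the first one gives a bound of the form $\big(\frac d2\big)\mathrm{ALG}\le \big(\frac d2-1\big)n + u + v_{d-2} + \text{(leftover terms)}$, which after collecting telescoping contributions should reduce to roughly $\mathrm{ALG}\le\frac{2d-2}{d}\cdot\frac{n}{?}$; more precisely I expect a combination that, using $u\le |V_{d-1}|$ and $\sum_i v_i + |V_{d-1}| = n$, yields an upper bound on $\mathrm{ALG}$ of the form $\mathrm{ALG}\le \frac{(3d-2)}{(3d+4)}\cdot\frac{(2d-2)}{(d-2)}\cdot\psi_4(G)$ once the optimum lower bound is plugged in. For the lower bound I will use Lemma~\ref{lemma01}: $\psi_4(G)\ge\frac{d-1}{2d}n-\frac1{d^2}f$ where $f$ counts vertices on triangle components of $G[V-F^*]$. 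To make this bite I also need an \emph{upper} bound on $f$ — and here the natural move is to observe that a triangle component of $G[V-F^*]$ forces $3(d-2)$ edges into $F^*$ while contributing $3$ vertices, and a counting argument against $d|F^*|$ (plus $|V|=|F^*|+a+b+c+e+f$ from the Lemma's proof) caps $f$ in terms of $n$ and $\psi_4(G)$; combining the two estimates of $\psi_4(G)$ eliminates $f$ and produces the clean denominator $(3d+4)(d-2)$ in the target ratio.

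Assembling: I would (1) prove correctness of all $\frac d2$ output candidates; (2) record the menu of inequalities $\mathrm{ALG}\le u+\sum v_i$ and $\mathrm{ALG}\le n-(v_{2i-1}+v_{2i})$; (3) take the best convex combination of these — this is the crux of the ratio computation and where the constants $3d-2$ and $3d+4$ emerge — to get $\mathrm{ALG}$ bounded by a fixed fraction of $n$ expressed through $f$; (4) invoke Lemma~\ref{lemma01} and the complementary upper bound on $f$ to convert that into a bound of the form $\mathrm{ALG}\le \frac{(3d-2)(2d-2)}{(3d+4)(d-2)}\,\psi_4(G)$; and (5) note the running time is $O(d^2|V|)$, dominated by the $d-2$ greedy peeling rounds (each a degree-bucketed linear scan) plus the linear-time solve on the max-degree-$2$ remainder. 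The main obstacle I anticipate is step (3) together with the bookkeeping in step (4): choosing the weights in the convex combination so that the leftover $v_i$ and $u$ terms cancel against $|V|=|F^*|+a+b+c+e+f$ in exactly the way that yields the stated rational function, rather than a weaker ratio. The $P_4$-freeness verification in step (1) is routine but tedious; the ratio optimization is where the real work lies.
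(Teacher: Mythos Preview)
Your outline has the right shape for correctness and running time, but the ratio argument has one genuine gap and one unnecessary detour.

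\textbf{The gap.} You propose to control the first candidate cover using only ``$u\le |V_{d-1}|$''. That is far too weak: with it, $u+\sum_{i\le d-2}v_i \le |V|$, which is useless. The ingredient you are missing is
\[
|U_{d-1}| \;\le\; \tfrac{2}{5}\,|V_{d-1}|,
\]
obtained from $\psi_4(P_\ell)=\lfloor\ell/4\rfloor\le \ell/4$ and $\psi_4(C_\ell)=\lceil\ell/4\rceil\le \tfrac{2}{5}\ell$ for every $\ell\ge 3$ (recall $G[V_{d-1}]$ is a disjoint union of paths and cycles). This yields $u+\sum_{i\le d-2} v_i \le \tfrac{2}{5}|V|+\tfrac{3}{5}\sum_{i\le d-2} v_i$. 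Now put weight $\tfrac{6}{3d+4}$ on each of the $\tfrac d2-1$ bounds $|V|-(v_{2i-1}+v_{2i})$ and weight $\tfrac{10}{3d+4}$ on the bound just displayed; the $v_i$'s cancel exactly and one obtains $\mathrm{ALG}\le \tfrac{3d-2}{3d+4}\,|V|$. Without the $\tfrac{2}{5}$ estimate no convex combination of your inequalities can produce this constant, so the factor $(3d+4)$ in the target ratio will not appear.

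\textbf{The detour.} You plan to lower-bound $\psi_4(G)$ via Lemma~\ref{lemma01} together with a separate upper bound on $f$, and then ``eliminate $f$''. The paper does something much simpler: it invokes the Bre\v{s}ar et al.\ bound $\psi_4(G)\ge \tfrac{d-2}{2d-2}|V|$ from Eq.~(\ref{eq2}) directly. There is no $f$ anywhere in the proof of Theorem~\ref{thm04}; the triangle count $f$ belongs to the cubic-graph analysis of Section~3, where it balances two different algorithms, not here. (Your route would in fact arrive at the same place---plugging the trivial $f\le |V|-|F^*|$ into Lemma~\ref{lemma01} recovers exactly $\psi_4(G)\ge\tfrac{d-2}{2d-2}|V|$---but it is circuitous, and your remark that step~(3) yields an $\mathrm{ALG}$ bound ``expressed through $f$'' is mistaken: the upper bound on $\mathrm{ALG}$ is a pure fraction of $|V|$ and $f$ never enters it.)

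\textbf{A simplification for correctness.} You do not need any claw/triangle component taxonomy for $G[V_{2i-1}\cup V_{2i}]$. Each $V_j$ is independent (when a vertex is added to $V_j$ and removed, every remaining neighbor drops below the threshold degree $d-j+1$). Moreover, a vertex $v\in V_{2i}$ has degree exactly $d-2i+1$ in $G\bigl[V-\bigcup_{j\le 2i-1}V_j\bigr]$ but at most $d-2i+2$ in $G\bigl[V-\bigcup_{j<2i-1}V_j\bigr]$; the difference is the number of its neighbors in $V_{2i-1}$, hence at most one. Thus any path in $G[V_{2i-1}\cup V_{2i}]$ has at most three vertices (two in $V_{2i}$, one in $V_{2i-1}$), and $P_4$-freeness follows immediately.
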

\begin{proof}
First of all, since $G = (V, E)$ is $d$-regular, we have $|E| = \frac d2 |V|$;
therefore, $V_i$ is computed in $O(d |V|)$ time, for each $i = 1, 2, \ldots, d-2$.
Computing $U_{d-1}$ needs only $O(|V|)$ time since each connected component of $G[V_{d-1}]$ is either a simple cycle or a simple path.
That is, the running time of {\sc Approx2} is in $O(d^2 |V|)$.

Next, we conclude that the vertices of $V_i$, for each $i$, are pairwise non-adjacent to each other.
Also, in the induced subgraph graph $G[V - \cup_{j=1}^{2i-2} V_j]$ which has maximum degree $d-2i+2$,
since every vertex of the computed subset $V_{2i-1}$ has degree exactly $d-2i+2$,
a vertex of $V_{2i}$ can be adjacent to at most one vertex of $V_{2i-1}$.
These two properties suggest that the longest path in the subgraph induced on $V_{2i-1} \cup V_{2i}$, $G[V_{2i-1} \cup V_{2i}]$,
contains at most three vertices (two of $V_{2i}$ and one of $V_{2i-1}$),
that is, $V - (V_{2i-1} \cup V_{2i})$ is a $4$-path vertex cover in $G$, for each $i = 1, 2, \ldots, \frac d2 - 1$.
On the other hand, since $U_{d-1}$ is a $4$-path vertex cover in $G[V_{d-1}]$, $U_{d-1} \cup \cup_{i=1}^{d-2} V_i$ is a $4$-path vertex cover in $G$.
This proves that the solution returned by {\sc Approx2} is feasible.

Recall that each connected component of $G[V_{d-1}]$ is either a simple cycle or a simple path.
The optimal $4$-path vertex cover $U_{d-1}$ in $G[V_{d-1}]$ contains exactly $\lfloor \frac k4\rfloor$ vertices from each $k$-path,
and contains exactly $\lceil \frac k4\rceil$ vertices from each $k$-cycle~\cite{ACB12,BJK13}.
From the fact that
$\left\lfloor \frac k4\right\rfloor \le \frac 14 k$ and $\left\lceil \frac k4\right\rceil \le \frac 25 k$ for all $k \ge 4$,
we conclude that
\[
|U_{d-1}| \le \frac 25 |V_{d-1}|.
\]
%
Consequently, the size of $U_{d-1} \cup \cup_{i=1}^{d-2} V_i$ is
\begin{equation}
\label{eq8}
|U_{d-1}| + \sum_{i=1}^{d-2} |V_i| \le \frac 25 |V| + \frac 35 \sum_{i=1}^{d-2} |V_i|.
\end{equation}
It follows that the minimum cardinality of these $\frac d2$ $4$-path vertex covers is at most
\[
\frac 6{3d+4} \sum_{i=1}^{\frac d2 - 1} \Big(|V| - |V_{2i-1}| - |V_{2i}|\Big)
	+ \frac {10}{3d+4} \left(\frac 25 |V| + \frac 35 \sum_{i=1}^{d/2 - 1} \Big(|V_{2i-1}| + |V_{2i}|\Big)\right)
	= \frac {3d-2}{3d+4} |V|.
\]
From Eq.~(\ref{eq2}) we obtain the lower bound of $\frac {d-2}{2d-2} |V|$ on $\psi_4(G)$ using $k = 4$~\cite{BJK13},
and therefore we prove that the algorithm {\sc Approx2} has an approximation ratio of $\frac {(3d-2)(2d-2)}{(3d+4)(d-2)}$.
Note that due to $\frac {3d-2}{3d+4} < \frac {d/2}{d/2+1}$,
the above ratio is strictly less than that stated in Theorem~\ref{thm02} (or the one by Devi et al.~\cite{DMM15}).
\end{proof}

\begin{corollary}
\label{coro01}
For the {\sc P$4$VC} problem on $4$-regular graphs, the algorithm {\sc Approx2} is an $O(|V|)$-time $\frac {15}8$-approximation.
\end{corollary}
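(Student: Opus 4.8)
The plan is simply to specialize Theorem~\ref{thm04} to the case $d = 4$, which is the smallest even value of $d$ for which {\sc Approx2} is defined. First I would verify the hypotheses of Theorem~\ref{thm04}: $d = 4$ is even and satisfies $d \ge 4$, so the theorem applies verbatim, guaranteeing that {\sc Approx2} returns a feasible $4$-path vertex cover of $G$ of size within a factor $\frac{(3d-2)(2d-2)}{(3d+4)(d-2)}$ of $\psi_4(G)$, in time $O(d^2 |V|)$.

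Next I would substitute $d = 4$ into the approximation ratio, obtaining $\frac{(3\cdot 4 - 2)(2\cdot 4 - 2)}{(3\cdot 4 + 4)(4 - 2)} = \frac{10 \cdot 6}{16 \cdot 2} = \frac{60}{32} = \frac{15}{8}$, which is exactly the claimed bound (and equals $1.875$, consistent with the discussion preceding the corollary). For the running time, since $d = 4$ is now a fixed constant the bound $O(d^2 |V|)$ collapses to $O(|V|)$; one can also check this directly from the pseudocode, as Steps~1 and~2 each touch the $|E| = \frac d2 |V| = 2|V|$ edges a bounded number of times, Step~3 solves {\sc P$4$VC} optimally on a disjoint union of simple paths and simple cycles in linear time, and Step~4 takes the minimum of just $\frac d2 = 2$ candidate covers, each of size at most $|V|$.

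I do not anticipate any genuine obstacle: the corollary is a pure arithmetic specialization of the already-established Theorem~\ref{thm04}, and the only things to double-check are that the fraction reduces exactly to $\frac{15}{8}$ (rather than merely to something close) and that the constant $d$ indeed turns $O(d^2 |V|)$ into $O(|V|)$. (If one instead wanted the sharper ratio $1.852$ advertised for $4$-regular graphs, that would require the refined analysis developed later in this section, but it is not needed for the statement of Corollary~\ref{coro01}.)
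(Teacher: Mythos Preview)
Your proposal is correct and is exactly what the paper intends: Corollary~\ref{coro01} is stated without proof because it follows immediately from Theorem~\ref{thm04} by plugging in $d=4$, giving the ratio $\frac{10\cdot 6}{16\cdot 2}=\frac{15}{8}$ and running time $O(16|V|)=O(|V|)$. There is nothing more to add.
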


\subsection{{\sc Approx2} is a $1.852$-approximation}
In this section, we present a better analysis for our algorithm {\sc Approx2} than what is done in the proof of Theorem~\ref{thm04}.
Theorem~\ref{thm04} leads to the conclusion in Corollary~\ref{coro01} that {\sc Approx2} is a $\frac {15}8$-approximation for {\sc P$k$VC} on $4$-regular graphs.
Our better analysis shows that the performance ratio of {\sc Approx2} is actually at most $1.852$.

Recall that when $d = 4$, our algorithm {\sc Approx2} (see Figure~\ref{fig03}) computes $V_1$ and $V_2$,
and computes an optimal $4$-path vertex $U_3$ in $G[V_3]$, where $V_3 = V - (V_1 \cup V_2)$.
Both $V_3$ and $U_3 \cup (V_1 \cup V_2)$ are feasible $4$-path vertex covers in the input graph $G$.
{\sc Approx2} returns the smaller one between $V_3$ and $U_3 \cup (V_1 \cup V_2)$, denoted as $A$.
In the following we denote $V_1 \cup V_2$ as $V_{1,2}$.

\subsubsection{An outline of the analysis}
Throughout the analysis, we fix an arbitrary optimal $4$-path vertex cover $B$ in $G$ for discussion.

We color the vertices of $B$ {\em black} and color the other vertices of $G$ (that is, $V - B$) {\em white}.
An edge of $G$ is {\em black} (respectively, {\em white}) if both of its endpoints are black (respectively, white);
an edge of $G$ neither black nor white is {\em bicolor}.
Using this coloring scheme, a bicolor $(V_{1,2}, V_3)$-edge has its endpoint in $V_{1,2}$ white and its endpoint in $V_3$ black.
Bicolor $(V_{1,2}, V_{1,2})$-edges, $(V_3, V_{1,2})$-edges, and $(V_3, V_3)$-edges are defined similarly.

Recall that each connected component of $G[V - B]$ is one of the following five kinds:
\begin{enumerate}
\parskip=0pt
\item[A)]
	a $1$-path,
\item[B)]
	a $2$-path,
\item[C)]
	a $3$-path,
\item[E)]
	a $K_{1,\ell}$ claw, for $\ell = 3, 4$, and
\item[F)]
	a triangle ({\it i.e.}, a $C_3$ cycle).
\end{enumerate}
We merge the first four kinds and name them uniformly a {\em star}.
The {\em center} vertex of a star $S$ is the one with the maximum degree (tie broken arbitrarily),
and all the other vertices (can be $0, 1, 2, 3$, or $4$ of them) are referred to as the {\em satellites} of $S$.
This way, each connected component of $G[V - B]$ is either a triangle or a star.

Our goal is to show that $|A| \le 1.852 |B|$.
Let $F = U_3 \cup V_{1,2}$.
Since $|A| = \min\{|V_3|, |F|\} \le (1 - \alpha) |V_3| + \alpha |F|$ for any coefficient $0 \le \alpha \le 1$,
it suffices to show that there is a constant $0 \le \alpha \le 1$ such that
$(1 - \alpha) |V_3| + \alpha |F| \le 1.852 |B|$.
In the remainder of this section, we show that $\alpha = 0.556$ satisfies this inequality. 
To reach our goal, we first make two important observations summarized in the next two lemmas, respectively.

\begin{lemma}
\label{lemma02}
Let $b_e$ be the number of black edges in $G$, and $s_c$ be the number of star components in $G[V - B]$.
Then, $|B| = \frac 13 |V| + \frac 13 b_e + \frac 13 s_c$. 
\end{lemma}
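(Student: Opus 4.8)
The plan is to prove the identity $|B| = \frac13|V| + \frac13 b_e + \frac13 s_c$ by a counting argument based on degrees, exploiting the $4$-regularity of $G$ and the structural classification of the components of $G[V-B]$. The key observation is that the claimed identity is equivalent to $3|B| - |V| = b_e + s_c$, and since $|V| = |B| + |V-B|$, this is in turn equivalent to $2|B| - |V-B| = b_e + s_c$. So the real target is the clean identity
\begin{equation*}
2|B| - |V - B| = b_e + s_c.
\end{equation*}
First I would count the edges of $G$ incident to $B$ with multiplicity, i.e. $\sum_{v \in B} d(v) = 4|B|$ since $G$ is $4$-regular. Each such incidence is either on a black edge (contributing $2$ per black edge) or on a bicolor edge (contributing $1$ per bicolor edge), so $4|B| = 2 b_e + b_b$, where $b_b$ denotes the number of bicolor edges.

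Next I would count the edges incident to $V - B$ in the same way: $\sum_{v \in V-B} d(v) = 4|V-B| = 2 w_e + b_b$, where $w_e$ is the number of white edges (edges inside $G[V-B]$). Subtracting the two identities gives $4|B| - 4|V-B| = 2b_e - 2w_e$, i.e. $2|B| - 2|V-B| = b_e - w_e$, hence $2|B| - |V-B| = b_e - w_e + |V-B|$. So it remains to show $|V - B| - w_e = s_c$; that is, the number of white vertices minus the number of white edges equals the number of star components. But $w_e$ is just the total number of edges in $G[V-B]$, and $|V-B|$ is the total number of vertices, so $|V-B| - w_e = \sum_{\text{components } Q} (|V(Q)| - |E(Q)|)$. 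For a star component (a $1$-, $2$-, $3$-path or a $K_{1,\ell}$ claw, all of which are trees) we have $|V(Q)| - |E(Q)| = 1$, and for a triangle component we have $|V(Q)| - |E(Q)| = 0$. Therefore $\sum_Q (|V(Q)| - |E(Q)|)$ counts exactly the number of star components, which is $s_c$. Chaining the equalities back up yields the lemma.

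I do not expect any genuine obstacle here; the argument is a routine double-counting combined with the tree-versus-cycle dichotomy for the components of $G[V-B]$, which is supplied by the no-$4$-path structure recalled just before the lemma statement. The only point that needs a word of care is the claim that every non-triangle component is a tree: a connected graph on $k$ vertices with no $P_4$ is easily checked to be either a star $K_{1,\ell}$ (for some $\ell \ge 0$) or a triangle, and in the star case it is acyclic, so $|V(Q)| - |E(Q)| = 1$ indeed. I would state this observation explicitly (it is essentially already in the displayed classification A--F) and then present the three displayed edge-counting identities in order, concluding with the one-line substitution that produces $|B| = \frac13|V| + \frac13 b_e + \frac13 s_c$.
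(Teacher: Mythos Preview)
Your proof is correct and is at heart the same degree-counting argument as the paper's, but you organize it more cleanly. The paper introduces explicit parameters $x_0,\ldots,x_4$ for the number of star components with $i$ satellites and $y$ for the number of triangle components, writes down $|V| = |B| + \sum_i (i+1)x_i + 3y$ and the bicolor-edge count $4|B| - 2b_e = \sum_i 2(i+2)x_i + 6y$, and then eliminates $y$ algebraically to reach $3|B| = |V| + b_e + s_c$. You instead subtract the two degree identities $4|B| = 2b_e + b_b$ and $4|V-B| = 2w_e + b_b$ directly, and finish with the structural observation that $|V(Q)| - |E(Q)|$ equals $1$ for each tree (star) component and $0$ for each triangle, so $|V-B| - w_e = s_c$. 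This last step replaces the paper's case-by-case enumeration and elimination with a one-line Euler-type count; it is a bit more transparent and would generalize immediately if the component list were longer, whereas the paper's version makes the contribution of each star size visible.
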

\begin{proof}
We apply a similar counting as in the proof of Lemma~\ref{lemma01}.
Let $x_i$ denote the number of star components in which the star has $i$ satellites, for $i = 0, 1, 2, 3, 4$;
and $y$ denote the number of triangle components.
It follows that
\[
|V| = |B| + \sum_{i=0}^4 (i+1) x_i + 3 y.
\]
Because $G$ is $4$-regular and there are $b_e$ black edges, the number of bicolor edges (each connecting a vertex of $B$ and a vertex of $V - B$) is exactly
\[
4 |B| - 2 b_e = \sum_{i=0}^4 2 (i+2) x_i + 6 y.
\]
By eliminating $y$ from the above two equalities, we have 
\[
2 |V| - 2 |B| - \sum_{i=0}^4 2 (i+1) x_i = 4 |B| - 2 b_e - \sum_{i=0}^4 2 (i+2) x_i.
\]
Using $s_c = \sum_{i=0}^4 x_i$, we have $3 |B| = |V| + b_e + s_c$.
This proves the lemma.
\end{proof}

\begin{lemma}
\label{lemma03}
Let $p_{3\downarrow}$ (respectively, $p_{4\uparrow}$) be the total number of vertices in those connected components of $G[V_3]$
each is a path of order at most $3$ (respectively, at least $4$).
Let $c_3$ (respectively, $c_{4,6\uparrow}$) be the total number of vertices in those connected components of $G[V_3]$
each is a cycle of order exactly $3$ (respectively, exactly $4$ or at least $6$).
Then, $|U_3| \le \frac 25 |V_3| - \frac 25 \left(p_{3\downarrow} + c_3\right) - \frac 3{20} p_{4\uparrow} - \frac 1{15} c_{4,6\uparrow}$. 
\end{lemma}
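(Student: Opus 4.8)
The plan is to exploit the structural simplicity of $G[V_3]$. As already noted in the description of {\sc Approx2}, once $V_1$ (the peeled degree-$4$ vertices) and $V_2$ (the peeled degree-$3$ vertices) are removed, the graph $G[V_3]$ has maximum degree at most $2$, so every connected component of $G[V_3]$ is a path $P_k$ or a cycle $C_k$. Since a $4$-path is connected, an optimal $4$-path vertex cover of $G[V_3]$ decomposes as the disjoint union of optimal $4$-path vertex covers of its components, whence $|U_3| = \sum_H \psi_4(H)$ with $H$ ranging over the components. It therefore suffices to prove, for each component $H$ on $k$ vertices, that $\psi_4(H)$ is at most the contribution $H$ makes to the right-hand side of the claimed inequality, and then sum over $H$; here one uses that $p_{3\downarrow}+p_{4\uparrow}+c_3+c_{4,6\uparrow}$ together with the number of vertices lying on $5$-cycles accounts for every vertex of $V_3$.

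First I would record the exact values already invoked in the proof of Theorem~\ref{thm04}: $\psi_4(P_k)=\lfloor k/4\rfloor$ and $\psi_4(C_k)=\lceil k/4\rceil$ for $k\ge 4$, together with the trivial fact $\psi_4(C_3)=0$ (a $3$-vertex graph contains no $4$-path). Then I would run through five cases. A path on $k\le 3$ vertices contributes $0$ to $|U_3|$ and $\tfrac25 k-\tfrac25 k=0$ to the bound (its vertices are counted in $p_{3\downarrow}$). A path on $k\ge 4$ vertices contributes $\lfloor k/4\rfloor\le k/4=\tfrac25 k-\tfrac3{20}k$ (counted in $p_{4\uparrow}$). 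A $3$-cycle contributes $0$ to $|U_3|$ and $\tfrac25\cdot 3-\tfrac25\cdot 3=0$ to the bound (counted in $c_3$). A $5$-cycle contributes $\psi_4(C_5)=2=\tfrac25\cdot 5$, so it is already absorbed by the leading $\tfrac25|V_3|$ term with no correction term attached. Finally, a cycle on $k=4$ or $k\ge 6$ vertices contributes $\lceil k/4\rceil$, and I would show $\lceil k/4\rceil\le k/3=\tfrac25 k-\tfrac1{15}k$ (counted in $c_{4,6\uparrow}$). Adding these per-component inequalities over all components of $G[V_3]$ gives exactly the asserted bound on $|U_3|$.

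I expect the only genuinely delicate point to be the cycle inequality $\lceil k/4\rceil\le k/3$ and the partition into buckets that makes it usable. It is tight at $k=6$ and $k=9$, immediate for $k=4$ and for all $k\ge 9$ (where $\lceil k/4\rceil\le (k+3)/4\le k/3$), and needs only a one-line hand check for $k\in\{7,8\}$. Crucially it \emph{fails} for $k=5$, since $\lceil 5/4\rceil=2>5/3$; this is precisely why $5$-cycles are deliberately excluded from $c_{4,6\uparrow}$ and handled instead by the sharper estimate $\psi_4(C_5)=\tfrac25\cdot 5$ in the fourth case. So the main thing to be careful about is keeping the cycle lengths $3$, $5$, and ($4$ or $\ge 6$) in their own categories so that every per-component estimate is valid; the remainder is just direct substitution of the standard path/cycle values of $\psi_4$.
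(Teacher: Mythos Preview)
Your proposal is correct and follows essentially the same approach as the paper: both arguments rely on the component-wise bounds $\psi_4(P_k)\le k/4$ for $k\ge 4$, $\psi_4(C_k)\le k/3$ for $k\in\{4\}\cup\{6,7,\ldots\}$, $\psi_4(C_5)=\tfrac25\cdot 5$, and $\psi_4=0$ for short paths and triangles, and then aggregate. The paper presents this as the intermediate bound $|U_3|\le \tfrac14 p_{4\uparrow}+\tfrac13 c_{4,6\uparrow}+\tfrac25 c_5$ followed by eliminating $c_5$ via $|V_3|=p_{3\downarrow}+p_{4\uparrow}+c_3+c_{4,6\uparrow}+c_5$, whereas you compare each component directly to its share of the right-hand side; the content is the same.
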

\begin{proof}
It is known that $\psi_4(P_\ell) = \lfloor\frac {\ell}4 \rfloor$ and $\psi_4(C_\ell) = \lceil \frac {\ell}4 \rceil$,
where $P_\ell$ and $C_\ell$ are a simple path and a simple cycle of order $\ell$, respectively~\cite{ACB12,BJK13}.
Let $c_5$ be the total number of vertices in those connected components of $G[V_3]$ each is a cycle of order exactly $5$.
Therefore,
\[
|U_3| \le \frac 14 p_{4\uparrow} + \frac 13 c_{4,6\uparrow} + \frac 25 c_5.
\]
Using $|V_3| = p_{3\downarrow} + p_{4\uparrow} + c_3 + c_{4,6\uparrow} + c_5$ to cancel out $c_5$, we achieve the inequality stated in the lemma.
\end{proof}

By the above Lemmas~\ref{lemma02} and \ref{lemma03}, it remains to show the following inequality:
\begin{equation}
\label{eq11}
\frac {\alpha|V_1| + \alpha|V_2| + (1 - \frac {3\alpha}5) |V_3| - \frac {2\alpha}5 \left(p_{3\downarrow} + c_3\right)
	- \frac {3\alpha}{20} p_{4\uparrow} - \frac {\alpha}{15} c_{4,6\uparrow}}
	{\frac 13 |V| + \frac 13 b_e + \frac 13 s_c}
\le 1.852. 
\end{equation}

In $|B| = \frac 13 |V| + \frac 13 b_e + \frac 13 s_c$ (the denominator in Eq.~(\ref{eq11})),
we call $\frac 13 |V|$ the {\em basic lower bound} on $|B|$ and call $\frac 13 b_e + \frac 13 s_c$ the {\em extra lower bound} on $|B|$.

Similarly, in $|A| \le \alpha|V_1| + \alpha|V_2| + (1 - \frac {3\alpha}5) |V_3| - \frac {2\alpha}5 \left(p_{3\downarrow} + c_3\right)
	- \frac {3\alpha}{20} p_{4\uparrow} - \frac {\alpha}{15} c_{4,6\uparrow}$ (the numerator in Eq.~(\ref{eq11})),
we call $\alpha |V_1| + \alpha |V_2| + \left(1 - \frac {3\alpha}5 \right) |V_3|$ the {\em basic upper bound} on $|A|$ and
	call $\frac {2\alpha}5 \left(p_{3\downarrow} + c_3\right) + \frac {3\alpha}{20} p_{4\uparrow} + \frac {\alpha}{15} c_{4,6\uparrow}$ the {\em saving} on $|A|$. 

Roughly speaking, we used only the basic lower bound on $|B|$ (as in Eq.~(\ref{eq2}) and
	the basic upper bound $|A|$ (with $\alpha = 1$, as in Eq.~(\ref{eq8})) in the proof of Theorem~\ref{thm04} (when $d = 4$).
In other words, Lemma~\ref{lemma02} gives a better lower than Eq.~(\ref{eq2}) when $k = 4$ and $d = 4$,
and Lemma~\ref{lemma03} gives a better estimation than Eq.~(\ref{eq8}) when $d = 4$.
The extra lower bound and the saving will help us get a better analysis.

It seems difficult to verify Eq.~(\ref{eq11}) if we consider the graph $G$ as a whole. 
So, to ease the proof of Eq.~(\ref{eq11}), we consider the following two kinds of subgraphs of $G$ and want to verify Eq.~(\ref{eq11}) on each of these subgraphs. 
\begin{itemize}
\item {\em Type-1:}
	The subgraphs of this type one-to-one correspond to the connected components of $G[V - B]$, and they are constructed as follows.
	Consider a connected component $K$ of $G[V - B]$, which is either a $C_3$ cycle or a star, and all its vertices are white.
	Let $K'$ be the subgraph of $G$ induced by the vertices of $K$ and their black neighbors in $G$.
	Let $H$ be the graph obtained from $K'$ by deleting all black edges.
	Then, $H$ is the type-1 subgraph corresponding to $K$. 
\item {\em Type-2:}
	The subgraphs of this type one-to-one correspond to the black edges in $G$.
	That is, the subgraph corresponding to a black edge $e$ consists of only $e$ and its two ending black vertices. 
\end{itemize}
Let ${\cal G}_1$ (respectively, ${\cal G}_2$) be the collection of type-1 (respectively, type-2) subgraphs in $G$.
Obviously, each white vertex of $G$ appears in exactly one subgraph in ${\cal G}_1 \cup {\cal G}_2$.
In contrast, a black vertex of $G$ can appear in one or more subgraphs in ${\cal G}_1 \cup {\cal G}_2$.
Nevertheless, each edge of $G$ appears in exactly one subgraph in ${\cal G}_1 \cup {\cal G}_2$.

To prove Eq.~(\ref{eq11}), we proceed as follows:
\begin{description}
\item[{\it Step 1:}]\label{step:1} 
	Distribute the numerator and the denominator of the left hand side of the inequality to the subgraphs in ${\cal G}_1\cup {\cal G}_2$. 
\item[{\it Step 2:}]\label{step:2} 
	Prove that for each subgraph $H$ in ${\cal G}_1\cup {\cal G}_2$, $\frac {n_H}{d_H} \le 1.852$,
	where $n_H$ (respectively, $d_H$) is the portion of the numerator (respectively, denominator) of the left hand side of the inequality distributed to $H$. 
\end{description}

The next three subsections are devoted to detailing the above two steps, respectively.
After these two steps, we are done because for any sequence of positive numbers $x_1, x_2, \ldots, x_k, y_1, y_2, \ldots, y_k$,
$\frac{x_i}{y_i} \le 1.852$ for all $i = 1, 2, \ldots, k$ implies $\frac {\sum_{i=1}^k x_i}{\sum_{i=1}^k y_i} \le 1.852$.

\subsection{Distributing the denominator}
Initially, we distribute the basic lower bound (namely, $\frac 13 |V|$) evenly to the edges in $G$ so that each edge holds a basic lower bound of $\frac 16$;
we further distribute the extra lower bound (namely, $\frac 13 b_e + \frac 13 s_c$) evenly to the black edges in $G$ and the star components of $G[V- B]$
so that each black edge holds an extra lower bound of $\frac 13$ and so does each star component of $G[V- B]$.

If a $5$-cycle $C_5$ in $G[V_3]$ has at least one black edge, then it is {\em good};
otherwise, it is {\em bad}.
Consider a bad $5$-cycle $C_5$ in $G[V_3]$.
Since $B$ is a solution ({\it i.e.}, $4$-path vertex cover), $C_5$ must have exactly two black vertices and a unique white edge.
Let $e$ be the white edge in $C_5$, and $v$ be the white vertex of $C_5$ that is not an endpoint of $e$.
We call $v$ the {\em independent white vertex} in $C_5$.
If $v$ or $e$ appears in a star component of $G[V- B]$ or at least one vertex of $C_5$ is incident to a black edge in $G$, 
then $C_5$ is {\em slightly bad};
otherwise, $C_5$ is {\em very bad}. 
A simple but important observation is that no star component of $G[V- B]$ can contain both $v$ and $e$.
A $(V_{1,2}, V_3)$-edge of $G$ is {\em good} if its black endpoint either is an endpoint of a black edge in $G[V_3]$ 
or appears in a good or slightly bad $5$-cycle of $G[V_3]$.

First, consider a connected component $K$ of $G[V_3]$ that has at least one black edge.
Let $p$ be the number of black edges in $K$, and $q$ be the number of good $(V_{1,2}, V_3)$-edges in $G$ whose black endpoints appear in $K$.
We collect the extra lower bounds held by the black edges of $K$;
the total is obviously $\frac 13 p$.
From this total, we distribute $0.015q$ evenly to the $q$ good $(V_{1,2}, V_3)$-edges so that each of them receives $0.015$,
and then distribute the remaining (namely, $\frac 13 p - 0.015 q$) to the $p$ black edge so that each of them receives $\frac 13 - \frac {0.015 q}p$.
Since $\frac qp \le 6$, each black edge in $K$ still holds an extra lower bound of $\ge \frac {73}{300}$.

Next, consider a slightly bad $5$-cycle $C_5$ in $G[V_3]$.
Let $e$ be the white edge in $C_5$, and $v$ be the independent white vertex of $C_5$.
We transfer a portion of the extra lower bound (namely, $\frac 13$) held by $C_5$ as follows (three possible cases): 
\begin{itemize}
\item
	Suppose that $e$ appears in a star component $K$ of $G[V- B]$.
	Then, we say that $C_5$ is of {\em type-1}.
	Among the extra lower bound (namely, $\frac 13$) held by $K$, we transfer $0.015$ to each good $(V_{1,2}, V_3)$-edge whose black endpoint appears in $C_5$.
	Obviously, $C_5$ is the unique bad $5$-cycle whose white edge appears in $K$.
	Moreover, there are exactly $4$ good $(V_{1,2}, V_3)$-edges whose black endpoints appear in $C_5$.
	Thus, the extra lower bound still held by $K$ is $\frac 13 - 4 \times 0.015 = \frac {41}{150}$. 
\item
	Suppose that $v$ appears in a star component $K$ of $G[V- B]$ but $e$ does not.
	Then, we say that $C_5$ is of {\em type-2}.
	Among the extra lower bound (namely, $\frac 13$) held by $K$, we transfer $0.015$ to each good $(V_{1,2}, V_3)$-edge whose black endpoint appears in $C_5$.
	Obviously, there are at most $4$ bad $5$-cycles whose independent white vertices appear in $K$.
	Moreover, each bad $5$-cycle contains the black endpoints of exactly $4$ good $(V_{1,2}, V_3)$-edges.
	Thus, the extra lower bound still held by $K$ is $\frac 13 - 4 \times 4 \times 0.015 = \frac 7{75}$. 
\item
	Suppose that at least one vertex of $C_5$ is incident to a black edge in $G$. 
	Then, we say that $C_5$ is of {\em type-3}.
	We choose an arbitrary black edge $g$ such that one endpoint of $g$ is in $C_5$.
	Among the extra lower bound (namely, $\frac 13$) held by $g$, we transfer $0.015$ to each good $(V_{1,2}, V_3)$-edge whose black endpoint appears in $C_5$.
	Obviously, there are at most $3$ good $(V_{1,2}, V_3)$-edges whose black endpoints appear in $C_5$.
	Thus, the extra lower bound still held by $g$ is at least $\frac 13 - 3 \times 0.015 = \frac {173}{600}$. 
\end{itemize}
Table~\ref{tab04} summarizes how the basic and the extra lower bounds have been distributed to the edges of $G$ and the star components of $G[V - B]$. 

\begin{table}[htbp]
\caption{Fractional lower bounds distributed to the edges of $G$ and the star components of $G[V - B]$.\label{tab04}}
\begin{center}
\begin{tabular}{l||c|c|c|c}
From lower bound	& basic: $\frac 13 |V|$	& \multicolumn{3}{c}{extra: $\frac 13 b_e + \frac 13 s_c$} \\
\hline
\hline
Entity				& an edge 		& a black edge 			& a good $(V_{1,2}, V_3)$-edge 	& a star component \\
\hline
Fraction received	& $\frac 13$ 	& $\ge \frac {73}{300}$ & $0.015$ 						& $\ge \frac 7{75}$\\
\end{tabular}
\end{center}
\end{table}

\subsection{Distributing the numerator}
Initially, we distribute the basic upper bound on $A$ to the vertices of $G$ so that each $v \in V_{1,2}$ receives $\alpha$
while each $v \in V_3$ receives $1 - \frac {3\alpha}5$. 
Moreover, we distribute the saving on $|A|$ to the vertices in $V_3$ so that
each vertex in a $3$-cycle or a path of order at most $3$ in $G[V_3]$ receives $\frac  25 \alpha$, 
each vertex in a path of order at least $4$ vertices in $G[V_3]$ receives $\frac 3{20} \alpha$,
and each vertex in a cycle of order exactly $4$ or at least $6$ in $G[V_3]$ receives $\frac 1{15} \alpha$.

Since a black vertex of $G$ may belong to two or more subgraphs in ${\cal G}_1 \cup {\cal G}_2$, 
we next distribute the basic upper bound (namely, $\alpha$) held by each black $v \in V_1\cup V_2$ evenly to the edges incident to $v$ in $G$.
For the same reason, we distribute the basic upper bound (namely, $1 - \frac {3\alpha}5$) held by each black vertex $v \in V_3$
evenly to the edges incident to $v$ in $G$.
Table~\ref{tab05} summarizes how the basic upper bounds have been distributed. 

\begin{table}[htbp]
\caption{Fractional basic upper bounds distributed to the entities in $G$.\label{tab05}}
\begin{center}
\begin{tabular}{l||c|c|c|c|c|c|c|c}
 & \multicolumn{8}{c}{From basic upper bound: $\alpha |V_1| + \alpha |V_2| + \left(1 - \frac {3\alpha}5 \right) |V_3|$} \\
\hline
\hline
Entity	& \multicolumn{2}{c|}{a white vertex} 	& a black vertex 	& \multicolumn{2}{c|}{a bicolor edge} 	& \multicolumn{3}{c}{a black edge} \\
\cline{2-3} \cline{5-9}
		& in & in & or & black end & black end & both ends & both ends & others\\
		& $V_{1,2}$ & $V_3$ & a white edge & in $V_{1,2}$ & in $V_3$ & in $V_{1,2}$ & in $V_3$ \\
\hline
Received & $\alpha$ & $\frac{5 - 3\alpha}5$ & 0 & $\frac 14\alpha$ & $\frac{5-3\alpha}{20}$ & $\frac 12\alpha$ & $\frac{5-3\alpha}{10}$ & $\frac{5+2\alpha}{20}$ \\
\end{tabular}
\end{center}
\end{table}

We further consider transferring the savings held by the vertices in $V_3$. 
In some cases, we might discard all or a portion of the saving held by a vertex. 
In other words, we will show that Eq.~(\ref{eq11}) still holds even if we discard a portion of the saving on $|A|$%
\footnote{This suggests our final ratio $1.852$ is not strictly tight, but only almost tight.}

A $(V_{1,2}, V_3)$-edge of $G$ is {\em saving} if
its black endpoint not only has no black neighbor in $G[V_3]$ but also appears in a connected component of $G[V_3]$ that is not a $5$-cycle.
Consider such a connected component $K$ of $G[V_3]$ (that is, $K \ne C_5$).
For each white vertex $v$ of $K$, we perform the following:
\begin{itemize}
\item
	If $v$ has no black neighbor in $K$ (case 1), then we discard the whole saving held by $v$;
	otherwise (case 2), we distribute the whole saving held by $v$ evenly to its neighbors in $K$. 
\end{itemize}
After we have done the above for all the white vertices of $K$,
for each black vertex $u$ of $K$, if $u$ is incident to no saving $(V_{1,2}, V_3)$-edge in $G$, 
then we discard the whole saving held by $u$;
otherwise, we perform the following: 
\begin{itemize}
\item
	When $K$ is a cycle, $u$ is incident to one or two saving $(V_{1,2}, V_3)$-edges in $G$ and $u$ now holds a saving of at least 
	$\frac 1{15} \alpha + 2 \times \frac 1{30} \alpha = \frac 2{15} \alpha$, 
	in which the original saving held by $u$ was at least $ \frac 1{15} \alpha$ and
	$u$ received at least $\frac 1{30} \alpha$ from each of its white neighbors in $K$. 
	We distribute the whole saving held by $u$ evenly to the saving $(V_{1,2}, V_3)$-edges 
	incident to $u$ in $G$.
	Note that each saving $(V_{1,2}, V_3)$-edge incident to $u$ receives a saving of at least $\frac 1{15} \alpha$.
\item
	When $K$ is a singleton $u$, $u$ is incident to at most four saving $(V_{1,2}, V_3)$-edges in $G$ and the saving held by $u$ is $\frac 2{5} \alpha$. 
	We distribute the whole saving held by $u$ evenly to the saving $(V_{1,2}, V_3)$-edges incident to $u$ in $G$.
	Each saving $(V_{1,2}, V_3)$-edge incident to $u$ receives a saving of at least $\frac 1{10} \alpha$. 
\item
	When $K$ is a path of order at least $2$, $u$ is incident to at most three saving $(V_{1,2}, V_3)$-edges in $G$ and $u$ now holds a saving of
	at least $\frac 3{20} \alpha + \frac 3{40} \alpha = \frac 9{40} \alpha$,
	in which the original saving held by $u$ was at least $\frac 3{20} \alpha$ and
	$u$ received at least $\frac 3{40} \alpha$ from its white neighbors in $K$.
	We distribute the whole saving held by $u$ evenly to its incident saving $(V_{1,2}, V_3)$-edges.
	Each saving $(V_{1,2}, V_3)$-edge incident to $u$ receives a saving of at least $\frac 3{40} \alpha$.
\end{itemize}
Table~\ref{tab06} summarizes how the savings have been distributed.

\begin{table}[htbp]
\caption{Savings distributed to the entities in $G$.}\label{tab06}
\begin{center}
\begin{tabular}{l||c|c|c}
 & \multicolumn{3}{c}{From saving: $\frac {2\alpha}5 \left(p_{3\downarrow} + c_3\right) + \frac {3\alpha}{20} p_{4\uparrow} + \frac {\alpha}{15} c_{4,6\uparrow}$} \\
\hline
\hline
Entity		& saving $(V_{1,2}, V_3)$-edge 	& any other edge 	& any vertex \\
\hline
Received	& $\ge\frac 1{15} \alpha$ 		& 0 				& 0 \\
\end{tabular}
\end{center}
\end{table}

\subsection{Verifying Eq.~(\ref{eq11}) on subgraphs}
For a subgraph $H$ of $G$, we define the following notations:
\begin{itemize}
\item
	$u_H$: the total basic upper bound distributed to the vertices and edges of $H$;
\item
	$s_H$: the total saving distributed to the saving $(V_{1,2}, V_3)$-edges of $H$;
\item
	$\ell^b_H$: the total basic lower bound distributed to the edges of $H$; 
\item
	$\ell^e_H$: the total extra lower bound distributed to subgraphs of $H$. 
\end{itemize}
Our goal is to show that, for each $H \in {\cal G}_1\cup {\cal G}_2$, $\frac {u_H - s_H}{\ell^b_H + \ell^e_H} \le 1.852$.

Unfortunately, we may fail to do this for some subgraphs $H \in {\cal G}_1$, because of very bad $5$-cycles.
We call a $(V_{1,2}, V_3)$-edge {\em very bad} if its black endpoint appears in a very bad $5$-cycle in $G[V_3]$. 
Compared with other bicolor edges, very bad $(V_{1,2},V_3)$-edges hold large basic upper bounds but neither savings nor extra lower bounds. 

For each very bad $5$-cycle $C$ in $G[V_3]$,
the white edge of $C$ appears in a $3$-cycle in $G[V- B]$ while the independent white edge of $C$ appears in another $3$-cycle $C'$ in $G[V- B]$.
We call $C'$ the {\em companion $3$-cycle} of $C$ (see Figure~\ref{fig04}(a)).
\begin{figure}[htpb]
\begin{center}
\begin{subfigure}{0.49\textwidth}
  \setlength{\unitlength}{0.7bp}%
  \begin{picture}(298.15, 87.77)(0,0)
  \put(0,10){\includegraphics[scale=0.7]{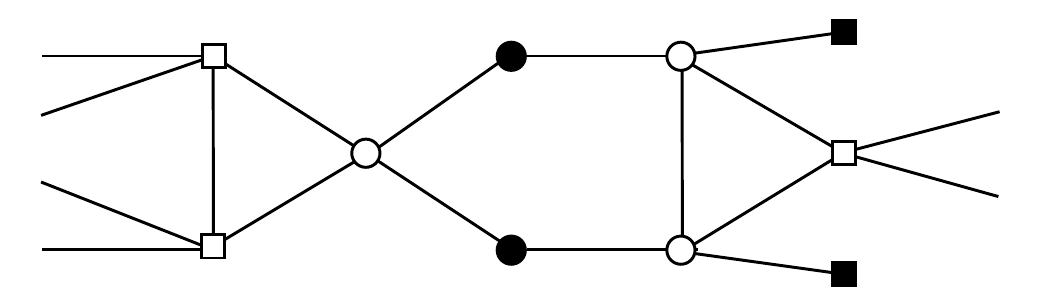}}
  \put(142.87,40.67){\fontsize{11.38}{13.66}\selectfont $C$}
  \put(64.70,42.89){\fontsize{11.38}{13.66}\selectfont $C'$}
  \put(6.29,75.31){\fontsize{8.54}{10.24}\selectfont ?}
  \put(5.77,57.74){\fontsize{8.54}{10.24}\selectfont ?}
  \put(5.67,38.37){\fontsize{8.54}{10.24}\selectfont ?}
  \put(6.57,19.20){\fontsize{8.54}{10.24}\selectfont ?}
  \put(288.63,58.29){\fontsize{8.54}{10.24}\selectfont ?}
  \put(288.70,33.84){\fontsize{8.54}{10.24}\selectfont ?}
  \end{picture}%
\caption{\label{fig04a}}
\end{subfigure}
\begin{subfigure}{0.49\textwidth}
  \setlength{\unitlength}{0.7bp}%
  \begin{picture}(262.43, 97.82)(0,0)
  \put(0,10){\includegraphics[scale=0.7]{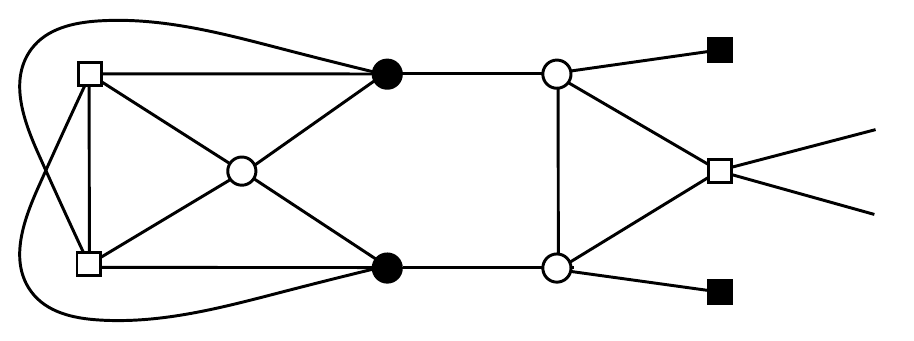}}
  \put(107.16,45.52){\fontsize{11.38}{13.66}\selectfont $C$}
  \put(28.98,47.07){\fontsize{11.38}{13.66}\selectfont $C'$}
  \put(252.92,62.48){\fontsize{8.54}{10.24}\selectfont ?}
  \put(252.99,38.02){\fontsize{8.54}{10.24}\selectfont ?}
  \end{picture}%
\caption{\label{fig04b}}
\end{subfigure}
\end{center}
\caption{A very bad $5$-cycle $C$ and its companion $3$-cycle $C'$, where circle vertices are in $V_3$, square vertices are in $V_{1,2}$,
	and each question mark is a black circle or a black square.\label{fig04}}
\end{figure}
From Figure~\ref{fig04}(a), we can see that each very bad $5$-cycle can contain the black endpoints of at most four very bad $(V_{1,2},V_3)$-edges in $G$;
we also see that $C'$ can accommodate the white endpoints of up to four very bad $(V_{1,2},V_3)$-edges in $G$.
Thus, $|X| \le |Y|$, where $X$ is the set of very bad $(V_{1,2},V_3)$-edges in $G$ whose black endpoints are in $C$ but whose white endpoints are not in $C'$,
and $Y$ is the set of $(V_{1,2},V_3)$- or $(V_{1,2},V_{1,2})$-edges in $G$ whose white endpoints are in $C'$ but whose black endpoints are not in $C$.
So, we can find an injection $f$ from $X$ to $Y$. 
As long as we are only concerned with the proof of Eq.~(\ref{eq11}), it is no harm to exchange the black endpoint of each $x \in X$ with the black endpoint of $f(x)$,
because the exchange does not change the total basic or extra lower bound, the total basic upper bound, or the total saving held by $x$ and $f(x)$.  
Therefore, we may assume without loss of generality that:

\begin{itemize}
\item
	(Assumption 1.)
	For every very bad $5$-cycle $C$ in $G[V_3]$, each black vertex of $C$ is adjacent to the two white vertices of the companion 3-cycle $C'$
	(see Figure~\ref{fig04}(b)). 
\end{itemize}

Now, we are ready to show that $\frac {u_H - s_H}{\ell^b_H + \ell^e_H} \le 1.852$, for each $H \in {\cal G}_1\cup {\cal G}_2$.
We remark that all these four quantities $u_H, s_H, \ell^b_H, \ell^e_H \ge 0$ are non-negative, for every $H \in {\cal G}_1$.

First, consider an $H \in {\cal G}_2$, which is a black edge. 
From Table~\ref{tab04}, $\ell^b_H = \frac 16$ and $\ell^e_H \ge \frac {73}{300}$. 
From Table~\ref{tab05}, $u_H \le 2 \times \frac {5-3\alpha}{10} = \frac {5-3\alpha}5$ 
because $\alpha = 0.556 < \frac{5}{8}$.
It follows that $\frac {u_H - s_H}{\ell^b_H + \ell^e_H} \le 1.852$.

Next, consider an $H \in {\cal G}_1$ corresponding to a star component $K$ of $G[V- B]$. 
Note that $K$ can have up to 4 satellites and both $V_{1,2}$ and $V_3$ may contain zero or more vertices of $K$.
Since $\alpha = 0.556$, a tedious but easy analysis shows that $\frac {u_H - s_H}{\ell^b_H + \ell^e_H}$ is maximized when 
\begin{itemize}
\item
	$K$ is a star with $5$ vertices, the center of $K$ is in $V_1 \cup V_3$, and each satellite of $K$ appears in a bad $5$-cycle of $G[V_3]$. 
\end{itemize}
In this worst case, from Table~\ref{tab05}, $u_H \le \alpha + 4\times \frac {5-3\alpha}5 + 4\times \left( 2\times \frac{5-3\alpha}{20} + \frac 14 \alpha\right)
	= \frac {30-8\alpha}5$.
Moreover, from Table~\ref{tab04}, $\ell^b_H = 16\times \frac 13 = \frac 83$, and $\ell^e_H \ge \frac 7{75}$. 
Thus, $\frac{u_H - s_H}{\ell^b_H + \ell^e_H} \le 1.852$.

Next consider an $H \in {\cal G}_1$ corresponding to a $3$-cycle $K$ of $G[V- B]$. 
Note that $H$ has $6$ black vertices.
We distinguish the following three cases:

\begin{description}
\item[{\rm Case 1:}]
	All vertices of $K$ belong to $V_3$. In this case, all black vertices of $H$ belong to $V_{1,2}$.
	Thus, $u_H \le 3\times \frac {5-3\alpha}5 + 6\times \frac 14 \alpha = \frac {30-3\alpha}{10}$ from Table~\ref{tab05},
	$s_H \ge 3\times \frac 1{15} \alpha = \frac 15 \alpha$ from Table~\ref{tab06},
	and $\ell^b_H = 9\times \frac 16 = \frac 32$ from Table~\ref{tab04}. 
	So, $\frac{u_H - s_H}{\ell^b_H + \ell^e_H} \le 1.852$. 
\item[{\rm Case 2:}]
	Exactly one vertex of $K$ belongs to $V_3$.
	In this case, both $V_{1,2}$ and $V_3$ may contain some black vertices of $H$. 
	Since $\alpha = 0.556$, a tedious but easy analysis shows that $\frac{u_H - s_H}{\ell^b_H + \ell^e_H}$ is maximized
	when all black vertices of $H$ belong to $V_3$.
	In this worst case,
	$u_H \le \frac {5-3\alpha}5 + 2\alpha + 6\times \frac{5-3\alpha}{20} = \frac{5+\alpha}2$ from Table~\ref{tab05},
	and $\ell^b_H = 9\times \frac 16$ from Table~\ref{tab04}. 
	So, $\frac{u_H - s_H}{\ell^b_H + \ell^e_H} \le 1.852$. 
\item[{\rm Case 3:}]
	Exactly one vertex $u$ of $K$ belongs to $V_{1,2}$.
	In this case, at least two black vertices of $H$ belong to $V_{1,2}$.
	Moreover, by Assumption~1, each bicolor edge incident to $u$ in $H$ is either a $(V_{1,2}, V_{1,2})$-edge or a good or saving $(V_{1,2}, V_3)$-edge in $K$.
	Since $\alpha = 0.556$, a tedious but easy analysis shows that $\frac{u_H - s_H}{\ell^b_H + \ell^e_H}$ is maximized
	when exactly four black vertices of $H$ belong to $V_{1,2}$.
	In this worst case,
	$u_H \le 2\times \frac{5-3\alpha}5 + \alpha + 2\times \frac{5-3\alpha}{20} + 4\times \frac 14 \alpha = \frac {10+\alpha}5$ from Table~\ref{tab05},
	while $\ell^b_H = 9\times \frac 16$ from  Table~\ref{tab04}.
	So, $\frac{u_H - s_H}{\ell^b_H + \ell^e_H} \le 1.852$. 
\end{description}
In summary, we have proved the following theorem.

\begin{theorem}
\label{thm05}
For the {\sc P$k$VC} problem on $4$-regular graphs, the algorithm {\sc Approx2} actually is a $1.852$-approximation.
\end{theorem}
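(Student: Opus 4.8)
The plan is to prove $|A| \le 1.852\,|B|$ for an arbitrary fixed optimal $4$-path vertex cover $B$ of the $4$-regular input $G$. The starting point is the convexity trick $|A| = \min\{|V_3|,|F|\} \le (1-\alpha)|V_3| + \alpha|F|$ with $F = U_3 \cup V_{1,2}$, valid for any $0\le\alpha\le1$; I would fix $\alpha = 0.556$. Substituting the refined lower bound $|B| = \tfrac13(|V| + b_e + s_c)$ from Lemma~\ref{lemma02} and the refined upper bound on $|U_3|$ from Lemma~\ref{lemma03} reduces the theorem to the single inequality Eq.~(\ref{eq11}), whose numerator is a weighted vertex count over $V_{1,2}$ and $V_3$ minus the ``savings'' and whose denominator is $\tfrac13(|V| + b_e + s_c)$. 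This replaces the crude estimates (Eq.~(\ref{eq2}) and the $\alpha=1$ version of Eq.~(\ref{eq8})) used in Theorem~\ref{thm04} by the extra lower bound $\tfrac13 b_e + \tfrac13 s_c$ and the saving terms.

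The main work is a discharging/local-ratio argument. Color $B$ black and $V-B$ white; recall each component of $G[V-B]$ is a triangle or a star, and each component of $G[V_3]$ is a path or cycle. I would distribute both sides of Eq.~(\ref{eq11}) onto two families of subgraphs of $G$: the Type-$1$ subgraphs (one per component $K$ of $G[V-B]$, formed by $K$ together with its black neighbours, with black edges deleted) and the Type-$2$ subgraphs (one per black edge, just the edge and its two endpoints). Since every edge of $G$, and every white vertex, lies in exactly one such subgraph, the denominator distributes cleanly: the basic part $\tfrac13|V|$ gives $\tfrac16$ per edge, and the extra part gives $\tfrac13$ per black edge and $\tfrac13$ per star component (Table~\ref{tab04}); the numerator distributes by pushing the basic upper bound held by a shared black vertex onto its incident edges (Table~\ref{tab05}), and the savings held by $V_3$-vertices onto the ``saving'' $(V_{1,2},V_3)$-edges (Table~\ref{tab06}). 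The theorem then follows from the elementary mediant inequality once one shows the purely local claim: for every subgraph $H$ in either family, $\dfrac{u_H - s_H}{\ell^b_H + \ell^e_H} \le 1.852$, where all four quantities are nonnegative.

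The hard part is that this local claim is \emph{false} as stated for certain Type-$1$ subgraphs, because of ``very bad'' $5$-cycles in $G[V_3]$: $5$-cycles with no incident black edge whose unique white edge and whose independent white vertex both lie in triangle components of $G[V-B]$. The $(V_{1,2},V_3)$-edges meeting such a cycle carry a large basic upper bound but no saving and no extra lower bound. The remedy is structural: for a very bad $5$-cycle $C$, its independent white vertex sits in a companion triangle $C'$ of $G[V-B]$ (Figure~\ref{fig04}(a)), and $C'$ can absorb the white endpoints of up to four such edges, yielding an injection from the offending edges at $C$ to spare $(V_{1,2},*)$-edges at $C'$; swapping black endpoints along this injection changes none of the four distributed quantities, so one may assume WLOG (Assumption~1) that every black vertex of $C$ is adjacent to both white vertices of $C'$ (Figure~\ref{fig04}(b)). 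Combined with partially discarding savings (which is why $1.852$ is only almost tight) and a bit of fiddly bookkeeping — transferring $0.015$ of extra lower bound per good edge from black edges / star components attached to ``slightly bad'' $5$-cycles to the good $(V_{1,2},V_3)$-edges — the remaining verifications are short arithmetic checks at $\alpha = 0.556$: one for a Type-$2$ black edge; one for star-component Type-$1$ subgraphs, whose worst case is a $5$-vertex star with every satellite in a bad $5$-cycle; and one for triangle-component Type-$1$ subgraphs, split into three cases according to whether $0$, $2$, or $3$ of the triangle's vertices lie in $V_{1,2}$. Identifying and neutralising the very bad $5$-cycles is the conceptual obstacle; everything else is routine computation.
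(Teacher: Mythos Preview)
Your proposal is correct and follows essentially the same approach as the paper's proof: the same convex-combination setup with $\alpha=0.556$, the same decomposition into Type-1 and Type-2 subgraphs, the same discharging scheme (Tables~\ref{tab04}--\ref{tab06}), the same identification and neutralisation of very bad $5$-cycles via Assumption~1, and the same final case analysis. One tiny slip: the triangle-component case split should be according to whether $0$, $1$, or $2$ (not $0$, $2$, or $3$) of the triangle's vertices lie in $V_{1,2}$, since $V_1$ and $V_2$ are each independent and hence $G[V_{1,2}]$ is triangle-free.
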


Table~\ref{tab03} summarizes the improvement we have made for approximating the {\sc P$k$VC} problem on regular graphs.
Comparing against Table~\ref{tab01},
all previously shown approximation ratios have been reduced except those for {\sc P$4$VC} on $d$-regular graphs where $d \ge 5$ is odd.

\begin{table}[ht]
\caption{The improved approximation ratios for {\sc P$k$VC}, $k \ge 3$, on regular graphs.\label{tab03}}
\begin{center}
\begin{tabular}{c||c|c|c|c}
{\sc P$k$VC}	& Cubic 					& $4$-regular
	 					& $d$-regular ($d \ge 5$ is odd)	& $d$-regular ($d \ge 6$ is even)\\
\hline
\hline
$k=3$ 			
						& \multicolumn{4}{c}{$\frac {\lfloor d/2 \rfloor (2d - 1)}{(\lfloor d/2 \rfloor + 1) (d - 1)}$~(Thm~\ref{thm02})}\\[4pt]
\hline
$k=4$ 			& $\frac {15}8$~(Thm~\ref{thm03})	& $1.852$~(Thm~\ref{thm05})
						& $\frac {\lfloor d/2\rfloor (2d - 2)}{(\lfloor d/2\rfloor + 1) (d - 2)}$~\cite{DMM15}
						& $\frac {(3d-2)(2d - 2)}{(3d+4)(d - 2)}$~(Thm~\ref{thm04})\\[4pt]
\hline
$k \ge 5$ 		& n.a.			& \multicolumn{3}{c}{$\frac {\lfloor d/2 \rfloor (2d - k + 2)}{(\lfloor d/2 \rfloor + 1) (d - k + 2)}$, when $k-2 < d$~(Thm~\ref{thm02})}\\
\end{tabular}
\end{center}
\end{table}

\subsection{Almost tight instances}
The analysis of our algorithm that leads to Theorem~\ref{thm05} gives us a constant $\alpha = 0.556$ such that
$\frac {(1-\alpha) |V_3| + \alpha |V_{1,2} \cup U_3|}{|B|} \le 1.852$.
One may wonder whether we can change the value of $\alpha$ so that a smaller upper bound can be achieved.
To answer this question, let us consider the two example graphs $G_1$ and $G_2$ in Figure~\ref{fig05}.

\begin{center}
\begin{figure}[htpb]
\begin{center}
\includegraphics{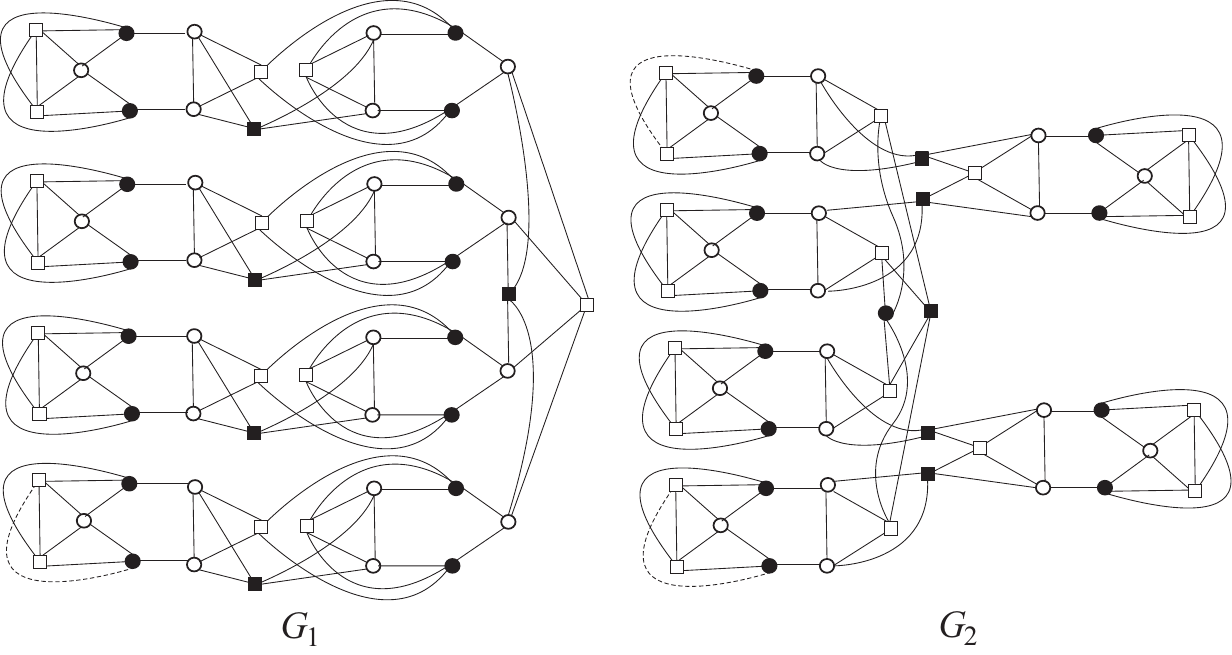}
\end{center}
\caption{Two example graphs for showing the ratio $1.852$ is almost tight for our algorithm {\sc Approx2}.
	In each graph, the circle vertices are in $V_3$, the square vertices are in $V_{1,2}$, respectively,
	and the filled (black) vertices are in the optimal $4$-path vertex cover $B$.\label{fig05}}
\end{figure}
\end{center}

For $G_1$, $|B| = 21$, $|V_3| = 40$, and $|V_{1,2} \cup U_3| = 38$;
for $G_2$, $|B| = 18$, $|V_3| = 31$, and $|V_{1,2} \cup U_3| = 35$. 
Our algorithm outputs $V_{1,2} \cup U_3$ for $G_1$ and outputs $V_3$ for $G_2$.
Solving the equation $\frac {40(1-\alpha) + 38\alpha}{21} = \frac {31(1-\alpha) + 35\alpha}{18}$ yields $\alpha = \frac {23}{40} = 0.575$,
we thus cannot get a better upper bound on $\frac {(1-\alpha) |V_3| + \alpha |V_{1,2} \cup U_3|}{|B|}$ than 
$\frac {31\times (1-\frac{23}{40}) + 35\times \frac{23}{40}}{18} = 1.85$.
Changing the value of $\alpha$ could possibly achieve a smaller upper bound than $1.852$, but only marginally.


In fact, using the above $G_1$ and $G_2$, we can show that $1.85$ is a lower bound on the worst-case performance ratio of our algorithm {\sc Approx2}.
We make a copy of $G_1$, denoted as $G_1^c$, and create a new graph $G$ by letting $G_1$, $G_1^c$, and $G_2$ be its three connected components;
or furthermore we can modify $G$ to be connected by exchanging the black endpoint of the dashed edge in $G_1$ (respectively, in $G_1^c$)
with that of a dashed edge in $G_2$ (see Figure~\ref{fig05}).
Either way, for the new graph $G$ we have $|B| = 21 \times 2 + 18 = 60$, $|V_3| = 40 \times 2 + 31 = 111$, and  $|V_{1,2} \cup U_3| = 38 \times 2 + 35 = 111$.
Therefore, the approximation ratio achieved by our algorithm {\sc Approx2} on $G$ is $\frac {111}{60} = 1.85$.

\section{{\sc P$4$VC} on regular bipartite graphs}
Devi et al.~\cite{DMM15} have shown that {\sc P$4$VC} is APX-complete on cubic bipartite graphs.
On the positive side, the $2$-approximation for {\sc P$3$VC} on (any) bipartite graphs by Kumar et al.~\cite{KMD14} is
also a $2$-approximation for {\sc P$4$VC} on (any) bipartite graphs.
In this section, we present a $\frac {d^2}{d^2 - d + 1}$-approximation for {\sc P$4$VC} on $d$-regular bipartite graphs, for any $d \ge 3$.
Note that this ratio is (much) smaller than the corresponding ratio for {\sc P$4$VC} on $d$-regular graphs as listed in Table~\ref{tab03},
and it approaches $1$ with increasing $d$.

Recall that Eq.~(\ref{eq4}) provides a lower bound on $\psi_4(G)$ when $G$ is $d$-regular,
using the total number $f$ of vertices in all the triangle components of the induced subgraph $G[V - F^*]$ where $F^*$ is an optimal $4$-path vertex cover.
When $G$ is also bipartite, there is no triangle in $G$, that is, $f = 0$.
Thus, Eq.~(\ref{eq4}) becomes
\begin{equation}
\label{eq9}
\psi_4(G) \ge \frac {d-1}{2d} |V|.
\end{equation}

Let $A$ and $B$ denote the two parts in the graph, {\it i.e.}, $G = (A, B, E)$ is $d$-regular bipartite,
where each edge of $E$ connects a vertex of $A$ and a vertex of $B$.
One clearly sees that $|A| = |B| = \frac 12 |V|$.

In the following algorithm for {\sc P$4$VC}, denoted as {\sc Approx3},
we compute a subset $A_1 \subset A$ by iteratively adding to it a degree-$d$ vertex in $A$,
followed by removing the vertex and all its $d$ neighbors (in $B$) from the graph.
The algorithm terminates when there is no degree-$d$ vertex in $A$ in the remaining graph, and returns $A - A_1$ as the solution.

\begin{theorem}
\label{thm06}
The algorithm {\sc Approx3} is an $O(d^2 |V|)$-time $\frac {d^2}{d^2 - d + 1}$-approximation for {\sc P$4$VC} on $d$-regular bipartite graphs.
\end{theorem}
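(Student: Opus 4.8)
The plan is to argue that $A - A_1$ is feasible and small, and then combine with the lower bound in Eq.~(\ref{eq9}). First I would establish feasibility: after the greedy loop terminates, every vertex of $A - A_1$ has degree at most $d-1$ in the remaining graph $G[(A - A_1) \cup (B - N(A_1))]$, and more importantly every edge of the remaining graph has at least one endpoint in $A - A_1$ of degree $\le d-1$. I claim $A - A_1$ is in fact a $4$-path vertex cover of $G$: deleting $A$ entirely from a bipartite graph leaves only the independent set $B$, which contains no $2$-path, so deleting $A - A_1 \supseteq$ (well, we delete $A-A_1$, keeping $A_1 \cup B$). The surviving graph $G[A_1 \cup B]$ — here I need that $G[A_1 \cup B]$ has no $4$-path. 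Since $A_1$ was built by repeatedly picking a degree-$d$ vertex and deleting it together with all $d$ of its neighbors in $B$, the vertices of $A_1$ are pairwise at distance $\ge 3$ and, crucially, each $b \in B$ has at most one neighbor in $A_1$. Hence in $G[A_1 \cup B]$ every component is a star centered at a vertex of $A_1$ (a $K_{1,d}$ claw, possibly with fewer leaves), which contains no $4$-path. This gives feasibility.

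Next I would bound $|A - A_1|$ from above, equivalently bound $|A_1|$ from below. Each iteration removes exactly one vertex of $A$ (into $A_1$) and $d$ vertices of $B$; when the loop halts, every remaining vertex of $A$ has lost at least one neighbor, i.e.\ is adjacent to some removed $B$-vertex. Let $t = |A_1|$, so the loop removed $t$ vertices of $A$ and $td$ vertices of $B$ — wait, the $B$-neighbors of distinct $A_1$-vertices are disjoint (each $b\in B$ has $\le 1$ neighbor in $A_1$), so exactly $td$ vertices of $B$ are removed, forcing $td \le |B| = \tfrac12|V|$. That only gives an upper bound on $t$, which is the wrong direction. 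The right counting is on the $A$-side: after termination, every vertex in $A - A_1$ is adjacent (in $G$) to a removed $B$-vertex, and each removed $B$-vertex has exactly $d$ neighbors in $A$, so $|A - A_1| \le d \cdot (td) = td^2$ — again not quite. Let me instead count: the $td$ removed $B$-vertices have $td^2$ edges to $A$ in total, and these edges must cover all of $A - A_1$ (each such vertex has a removed neighbor) as well as possibly hitting $A_1$; discounting the at most $td$ edges landing in $A_1$, we get $|A - A_1| \le td^2 - td$, i.e.\ $|A| - t \le td^2 - td$, so $\tfrac12|V| \le td^2 - td + t = t(d^2 - d + 1)$, giving $t \ge \frac{|V|}{2(d^2-d+1)}$ and therefore
\[
|A - A_1| = |A| - t \le \tfrac12|V| - \frac{|V|}{2(d^2-d+1)} = \frac{(d^2-d)|V|}{2(d^2-d+1)}.
\]

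Finally, dividing by the lower bound $\psi_4(G) \ge \frac{d-1}{2d}|V|$ from Eq.~(\ref{eq9}) yields
\[
\frac{|A - A_1|}{\psi_4(G)} \le \frac{(d^2-d)|V| / (2(d^2-d+1))}{(d-1)|V|/(2d)} = \frac{d(d-1)}{(d-1)(d^2-d+1)} \cdot d = \frac{d^2}{d^2-d+1},
\]
as claimed. For the running time, as in Theorem~\ref{thm04}, maintaining degrees costs $O(d|V|) = O(d^2|V|)$ over all iterations since $|E| = \tfrac d2|V|$. I expect the main obstacle to be the edge-covering count in the second step — getting the inequality $|A-A_1| \le td^2 - td$ exactly right, i.e.\ carefully arguing that the $td$ removed $B$-vertices send at most $td$ edges into $A_1$ (at most one per removed vertex, by the distance property) and hence at least $td^2 - td$ edges into $A - A_1$, each of which is enough to certify one vertex of $A - A_1$ as "touched." Verifying the star structure of $G[A_1 \cup B]$ for feasibility is the other point needing care, but it follows cleanly from the disjointness of the deleted neighborhoods.
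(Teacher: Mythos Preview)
Your proposal is essentially correct and follows the paper's argument: feasibility via the star structure of $G[A_1\cup B]$, an edge-counting bound giving $|A_1|(d^2-d+1)\ge |A|$, and the ratio via Eq.~(\ref{eq9}). Your count is the natural dual of the paper's --- the paper counts edges between $A-A_1$ and the \emph{unremoved} $B$-vertices (the singletons of $G[A_1\cup B]$), using that each vertex of $A-A_1$ has at most $d-1$ such neighbors, to get $(|A|-|A_1|)(d-1)\ge (|B|-d|A_1|)d$; you count edges to the \emph{removed} $B$-vertices instead. Both rearrange to the same inequality.

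One direction slip to fix in your count: from $|A-A_1|\le\#\{\text{edges from removed }B\text{ to }A-A_1\}$ you need an \emph{upper} bound on that edge set, hence a \emph{lower} bound on the edges from removed $B$ into $A_1$ --- not ``at most $td$'' as you wrote twice. The correct statement is that each removed $B$-vertex has \emph{exactly one} neighbor in $A_1$: it has at least one (the vertex that caused its removal), and at most one because any later $a'\in A_1$ was selected with full degree $d$, so all $d$ of its $G$-neighbors were still present, hence distinct from the already-removed $b$. With ``exactly $td$'' in place, the chain $|A-A_1|\le td^2 - td$ is valid. (Your ``at most one'' observation is what you need for the feasibility step, not for this bound.)
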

\begin{proof}
Each iteration of adding a degree-$d$ vertex in $A$ to $A_1$ and removing the vertex and all its $d$ neighbors from the graph
takes only $O(d^2)$ time and there are at most $|V|$ iterations.
Therefore, the overall running time of {\sc Approx3} is $O(d^2 |V|)$.

Note that $G$ is $d$-regular and bipartite.
In the subgraph $G[A_1 \cup B]$ induced on $A_1 \cup B$,
each connected component is a $K_{1, d}$ claw if it contains a vertex of $A_1$, or otherwise it is a $1$-path (singleton).
That is, $G[A_1 \cup B]$ does not contain any $4$-path and thus $A - A_1$ is a $4$-path vertex cover;
and in $G[A_1 \cup B]$, there are exactly $|A_1|$ $K_{1, d}$ claws and exactly $|B| - |A_1| d$ singletons.

Since each vertex of $A - A_1$ can be adjacent to at most $d-1$ $1$-path components of $G[A_1 \cup B]$, we have
\[
(|A| - |A_1|) (d-1) \ge (|B| - |A_1| d) d = (|A| - |A_1| d) d.
\]
It follows that
\begin{equation}
\label{eq10}
|A| - |A_1| \le \frac {d^2 - d}{d^2 - d + 1} |A| = \frac {d^2 - d}{2 (d^2 - d + 1)} |V|.
\end{equation}

Using the lower bound in Eq.~(\ref{eq9}), the above Eq.~(\ref{eq10}) leads to
\[
\frac {|A| - |A_1|}{\psi_4(G)} \le \frac {d^2 - d}{2 (d^2 - d + 1)} \times \frac {2d}{d - 1} = \frac {d^2}{d^2 - d + 1}.
\]
This proves the theorem.
\end{proof}

\section{Conclusions and remarks}
In this paper we considered the minimum $k$-path vertex cover problem ({\sc P$k$VC}), which generalizes the classic vertex cover ({\sc VC}) problem.
We designed several improved approximation algorithms,
including a $\frac {\lfloor d/2\rfloor (2d - k + 2)}{(\lfloor d/2\rfloor + 1) (d - k + 2)}$-approximation for {\sc P$k$VC}
on $d$-regular graphs when $1 \le k - 2 < d$.
This beats all the best known approximation results for {\sc P$k$VC} on $d$-regular graphs for $k \ge 3$,
except for {\sc P$4$VC} it ties with the best prior work and in particular they tie at $2$ on cubic graphs and $4$-regular graphs.
We then designed a better $1.875$-approximation for {\sc P$4$VC} on cubic graphs,
a better $1.852$-approximation for {\sc P$4$VC} on $4$-regular graphs,
and a better $\frac {(3d - 2)(2d - 2)}{(3d + 4)(d - 2)}$-approximation for {\sc P$4$VC} on $d$-regular graphs when $d \ge 4$ is even.
We also presented a $\frac {d^2}{d^2 - d + 1}$-approximation for {\sc P$4$VC} on $d$-regular bipartite graphs.

We thus improved all the best prior approximation algorithms for {\sc P$k$VC} on $d$-regular graphs for $k \ge 3$,
except for {\sc P$4$VC} on $d$-regular graphs when $d \ge 5$ is odd, where we tie with the best prior work.
We leave it open on how to take advantage of the odd degrees.

\subsubsection*{Acknowledgements.}
AZ and YC are supported by the NSFC Grants 11771114 and 11571252;
YC is also supported by the CSC Grant 201508330054.
ZZC is supported by in part by the Grant-in-Aid for Scientific Research of the Ministry of Education, Science, Sports and Culture of Japan, under Grant No. 18K11183.
GL is supported by the NSERC Canada.

\end{document}